\theoremstyle{plain}
\newtheorem{thm}[equation]{\bf Theorem}
\newtheorem{prop}[equation]{\bf Proposition}
\newtheorem{cor}[equation]{\bf Corollary}
\newtheorem{lemma}[equation]{\bf Lemma}
\theoremstyle{remark}
\theoremstyle{definition}
\numberwithin{equation}{section}
\renewcommand\footnotemark{}
\date{}
\begin{document}

    \title{Self-Conjugate-Reciprocal Irreducible Monic Factors of  $x^n-1$ over Finite Fields and Their Applications}

    \author{Arunwan Boripan, Somphong~Jitman and Patanee Udomkavanich}

    \thanks{A. Boripan is with the  Department of Mathematics and Computer Science,
    	Faculty of Science,  Chulalongkorn University,   Bangkok 10330,  Thailand  (email: boripan-arunwan@hotmail.com)}

    \thanks{S. Jitman (Corresponding Author)  is with the  Department of Mathematics, Faculty of Science,
    	Silpakorn University, Nakhon Pathom 73000,  Thailand
    	(email: sjitman@gmail.com).}

       \thanks{P. Udomkavanich is with the  Department of Mathematics and Computer Science,
       	Faculty of Science,  Chulalongkorn University,   Bangkok 10330,  Thailand  (email: pattanee.u@chula.ac.th)}
       
    \thanks{This research was  supported by the Thailand Research Fund and the Office of Higher Education Commission of Thailand under Research
    	Grant MRG6080012.}

    \maketitle

\begin{abstract}
Self-reciprocal  and self-conjugate-reciprocal  polynomials  over finite fields have been of interest due to their rich algebraic structures and wide applications.  Self-reciprocal irreducible monic factors of $x^n-1$ over finite fields and their applications  have been quite well studied.  In this paper, self-conjugate-reciprocal irreducible monic (SCRIM) factors of $x^n-1$ over finite fields of square order have been focused on. The characterization of such factors is given together the enumeration formula. In many cases,  recursive formulas for the number of SCRIM factors of $x^n-1$ have been given as well. As applications,  Hermitian complementary dual codes over finite fields and  Hermitian self-dual cyclic codes over finite chain rings of prime characteristic have been discussed. 

\smallskip

\noindent{\bf Keywords: } {complementary dual cyclic codes,  self-dual cyclic codes,  self-reciprocal polynomials,  self-conjugate-reciprocal polynomials}

% \PACS{PACS code1 \and PACS code2 \and more}
\noindent{\bf MSC: }{ 11T71, 11T60, 94B05}
\end{abstract}
 
\section{Introduction}

A non-zero  polynomial  $f(x)$   over a finite field $\mathbb{F}_q$  whose constant term is a unit    is said to be  \emph{self-reciprocal}   if $f(x)$ equals its  {\em reciprocal polynomial} $f^*(x):=x^{\deg(f(x))}f(0)^{-1}f\left(\frac{1}{x}\right)$. A  polynomial is said to be  {\em self-reciprocal irreducible monic} (SRIM) if it is self-reciprocal, irreducible and monic.   Due to their rich algebraic structures and wide applications,  SRIM and self-reciprocal polynomials  over finite fields have been studied and applied in various branches of Mathematics and Engineering.  In \cite{HB1975},   SRIM polynomials have been  characterized up to their degrees.    The orders and the number  of  SRIM polynomials of a given degree  over finite fields have been determined in  \cite{YM2004}.    SRIM factors of $x^n-1$   were used for characterizing and enumerating Euclidean self-dual cyclic codes over finite fields in \cite{JLX2011} and  characterizing   Euclidean complementary dual cyclic codes over finite fields in \cite{YM}. Recently, the enumeration of  simple root Euclidean self-dual cyclic codes    over finite chain rings has been given in terms of SRIM factors of $x^n-1$ over finite fields in \cite{BSG2016}.

Here, we focus on an extension of  SRIM polynomials over finite fields. Over a finite field $\mathbb{F}_{q^2}$ of square order, the {\em conjugate} of a  polynomial $f(x)=\sum_{i=0}^n f_i x^i$ over $\mathbb{F}_{q^2}$ is defined to be  $\overline{f(x)}=   \overline{f_0}+\overline{f_1}x+\dots+\overline{f_n}x^n$, where 
$\bar{~} : \mathbb{F}_{q^2} \rightarrow  \mathbb{F}_{q^2}$ is the field automorphism given  by $\alpha\mapsto \alpha^{q}$ for all $\alpha\in \mathbb{F}_{q^2}$. A polynomial $f(x)$ over $\mathbb{F}_{q^2}$ (with $f(0)\ne 0$)  is said to be {\em self-conjugate-reciprocal} if $f(x)$  equals its  {\em conjugate-reciprocal polynomial} $f^\dagger(x):= \overline{f^*(x)}$.  
% The {\em conjugate-reciprocal polynomial} of $f(x)$, denoted by $f^\dagger(x)$, is defined by  $f^\dagger(x)= \overline{f^*(x)}$. 
If, in addition, $f(x)$  is monic and irreducible, it is said to be \emph{self-conjugate-reciprocal irreducible monic} (SCRIM).  Characterization of monic irreducible polynomials over  $\mathbb{F}_{q^2}$ to be SCRIM have been  given together with the  enumeration  of SCRIM polynomials of a fixed  degree   in \cite{BJU2015}. Some properties of SCRIM factors of $x^n-1$ over  $\mathbb{F}_{ 2^{2l}}$  have been studied and used in the characterization and enumeration of  Hermitian self-dual cyclic codes in \cite{JLS2014}.% and  some Hermitian complementary dual  cyclic codes in \cite{L2017}. 

In this paper, we focus on SCRIM factors of $x^n-1$ over finite fields $\mathbb{F}_{q^2}$, where $q$ is an arbitrary prime power. The  characterization and enumeration of such polynomials are given in Section 2. In Section 3,   SCRIM factors of $x^n-1$ over finite fields $\mathbb{F}_{q^2}$ are applied in coding theory.  Precisely, Hermitian self-dual cyclic codes over finite chain rings of prime characteristic  and    Hermitian complementary dual cyclic codes are characterized and enumerated in terms of   SCRIM factors of $x^n-1$ over  $\mathbb{F}_{q^2}$.

\section{SCRIM Factors of  $x^n-1$ over Finite Fields}
In this section, the   SCRIM factors of $x^n-1$ over $\mathbb{F}_{q^2}$  are investigated in the case where $ q$ is an arbitrary   prime power and $n$ is a positive integer such that $\gcd(n,q)=1$.  The characterization of such factors is given in Subsection 3.1 and the enumeration is provided  in Subsection 3.2.

\subsection{Characterization of SCRIM Factors of  $x^n-1$ over $\mathbb{F}_{q^2}$}

Observe that  for every monic polynomial $f(x)$ in $\mathbb{F}_{q^2}[x]$ with $f(0)\ne 0$, we have  $\left(f^\dagger (x)\right)^\dagger =f(x)$. Therefore,   for a monic irreducible polynomial $f(x)\in \mathbb{F}_{q^2}[x]$, either $f(x)$ is SCRIM or $f(x)$ and $f^\dagger(x)$ form a   pair of distinct polynomials.  In the latter case, $f(x)$ and $f^\dagger(x)$ are called a {\em conjugate-reciprocal irreducible monic (CRIM) polynomial pair}.

Let $q$ be a prime power and let   $n$ be a positive integer such that $\gcd(g,n)=1$. Denote by $\Omega_{q^2,n}$ and  $\Lambda_{q^2,n}$ 
the set  of  SCRIM factors of $x^n-1$  and   the set of pairs of  CRIM polynomial pairs in the factorization of $x^n-1$ in $\mathbb{F}_{q^2}[x]$, respectively.  Then    $x^n-1$    can be factorized into  a product of irreducible monic polynomials  in  $\mathbb{F}_{q^2}[x]$ of the following form 
\begin{align}\label{eq-fistFactor}
x^n-1=\prod_{i=1}^{|\Omega_{q^2,n}|}  f_i(x)\prod_{j=1}^{|\Lambda_{q^2,n}|}\left(g_i(x)g_j^{\dagger} (x)\right),
\end{align}
where 
$ f_i(x)$  is a SCRIM polynomial and  $ g_j(x)$  and $   g_j^\dagger (x) $  are a CRIM polynomial pair
for all $1\leq i\leq |\Omega_{q^2,n}|$ and $1\leq j\leq |\Lambda_{q^2,n}|$.

For each coprime positive integers $i$ and $j$, the \textit{multiplicative order of $j$ modulo $i$}, denoted by  $\textrm{ord}_i{(j)}$,  is defined to be the smallest positive integer $s$ such that $j^s\equiv 1 \,\textrm{mod}\,i$. For a positive integer $i$ and nonnegative integer $s$, denote by   $2^s||i$ if $s$ is the largest integer such that $i$ is divisible by $2^s$, i.e., $2^s|i$ but $2^{s+1}\nmid i$.

For each $0\leq i<n$,  the \emph{cyclotomic  coset of $q^2$ modulo  $n$  containing $i$} is defined to be the set 
\begin{align*}
Cl_{q^2,n}(i) =\{iq^{2j} \, \textrm{mod}~ n \mid j =0,1,2,\dots\}.
\end{align*}
It is not difficult to see that  $Cl_{q^2,n}(i)  =\{iq^{2j} \, \textrm{mod n} \mid  0\leq j < \textrm{ord}_i{(q^2)}\}$ and $|Cl_{q^2,n}(i)  |=\textrm{ord}_i{(q^2)}$ .

For a primitive $n$th root of unity $\alpha$ in some extension field of $\mathbb{F}_{q^2}$ and for each    $0\leq i<n$,  it is well known (see \cite{LC2004}) that 
\[f(x)=\prod_{j\in Cl_{q^2,n}(i) }{(x-\alpha^j)}\]
is the minimal polynomial of $\alpha^i$ and it is a monic irreducible factor of $x^n-1$ over $\mathbb{F}_{q^2}$. 

%
%\begin{defn}
%    A minimal polynomial of an element $\alpha \in \mathbb{F}_{q^m}$ with respect to $\mathbb{F}_q$ is a nonzero monic polynomial $f(x)$ of the least degree in $\mathbb{F}_q[x]$ such that $f(\alpha)=0$. 
%\end{defn}
%
%
%
%\begin{thm}[{\cite[Theorem 3.48]{LC2004}}]%thm8
%    Let $n \in \mathbb{N}$ be such that $\gcd(n,q)=1.$ Let $m \in \mathbb{N}$ satisfying $n\vert (q^m-1)$ and $\alpha$ be a primitive element of $\mathbb{F}_{q^m}$. Then 
%    \begin{align*}
%    M^{(i)}_{\mathbb{F}_{q}}(x)=\prod_{j\in Cl_q(i)}(x-\alpha ^j)
%    \end{align*}
%    is the minimal polynomial of $\alpha ^i$,
%    where $Cl_q(i)_{\mathrm{\, mod \, } n}$ is the cyclotomic coset of $q$ modulo $n$ containing $i$. 
%\end{thm}
%
%Let $f(x)$ be an irreducible polynomial over $\mathbb{F}_q$, then $f(x)\vert (x^{\mathrm{ord}(f(x))}-1).$ Factorization of $x^{\mathrm{ord}(f(x))}-1$ over $\mathbb{F}_q$. Some minimal polynomials of $x^{\mathrm{ord}(f(x))}-1$  is equal to $f(x)$.
%
%From the discussion above, we conclude a relation between a  SCRIM polynomial and a $q^2$-cyclotomic coset.

We  have the following property.
\begin{lemma}[{\cite[Lemma 3.2]{BJU2015}}] \label{thm -qi}%lem2.3
    Let   $n$ be a positive integer such that $\gcd(q,n)=1$ and let $\alpha$ be a primitive $n$th root of unity. Let $0\leq i<n$ and let  $f(x)$ be the minimal polynomial of $\alpha^i$.   Then $f(x)$ is  SCRIM  if and only if $Cl_{q^2,n}(i) =Cl_{q^2,n}(-qi) $.
\end{lemma}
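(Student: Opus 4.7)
The plan is to translate the algebraic condition ``$f = f^\dagger$'' into a statement about the roots of $f$ in a suitable extension field of $\mathbb{F}_{q^2}$, and then recognize that condition as an equality of cyclotomic cosets.

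First I would write $f(x) = \prod_{j \in Cl_{q^2,n}(i)}(x - \alpha^j)$, using the fact recalled just above the lemma that this is the minimal polynomial of $\alpha^i$. Then I would compute the roots of $f^\dagger(x) = \overline{f^*(x)}$ in two steps. The reciprocal operation sends a monic polynomial with nonzero constant term to a monic polynomial whose root set is the set of inverses of the original roots, so the roots of $f^*(x)$ are $\{\alpha^{-j} : j \in Cl_{q^2,n}(i)\}$. For the conjugation step, I would observe that if $f(x) = \sum f_t x^t$ with coefficients in $\mathbb{F}_{q^2}$, then raising the identity $\sum f_t \beta^t = 0$ to the $q$-th power yields $\sum \overline{f_t}\,\beta^{qt} = 0$, so $\overline{g(x)}$ has root set $\{\beta^q : g(\beta) = 0\}$ whenever $g$ has simple roots. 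Applying this to $f^*$, the roots of $f^\dagger$ are $\{\alpha^{-qj} : j \in Cl_{q^2,n}(i)\}$.

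Next I would rewrite this root set using the definition of the cyclotomic coset: since $Cl_{q^2,n}(i) = \{i q^{2k} \bmod n : k \geq 0\}$, the root set of $f^\dagger$ becomes $\{\alpha^{-qi q^{2k}} : k \geq 0\} = \{\alpha^{m} : m \in Cl_{q^2,n}(-qi)\}$. Since $f^\dagger$ is monic (both the reciprocal and the conjugation preserve being monic, as the leading coefficient computations show) and has the same degree as $f$, it must coincide with the minimal polynomial of $\alpha^{-qi}$, hence is itself monic and irreducible.

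Finally I would invoke the fact that two monic irreducible polynomials over $\mathbb{F}_{q^2}$ are equal if and only if they share a root, equivalently if and only if their full root sets in the splitting field coincide. Thus $f = f^\dagger$ if and only if $\{\alpha^j : j \in Cl_{q^2,n}(i)\} = \{\alpha^j : j \in Cl_{q^2,n}(-qi)\}$, and since $\alpha$ has order exactly $n$, this is equivalent to $Cl_{q^2,n}(i) = Cl_{q^2,n}(-qi)$, giving the claimed characterization. I do not anticipate a serious obstacle; the only point that needs a little care is the bookkeeping that conjugation of coefficients corresponds to applying the Frobenius $\beta \mapsto \beta^q$ to roots, but this follows immediately by raising the defining relation to the $q$-th power.
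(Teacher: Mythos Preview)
The paper does not actually prove this lemma; it is quoted from \cite[Lemma 3.2]{BJU2015} and used as a black box, so there is no in-paper argument to compare against.

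Your proposed proof is correct and is the standard one. The only spots worth tightening are: (i) when you pass from ``$\beta$ a root of $g$ implies $\beta^q$ a root of $\overline{g}$'' to ``the root set of $\overline{g}$ is exactly $\{\beta^q:g(\beta)=0\}$'', you implicitly use that $\overline{g}$ also has simple roots---this is immediate here since $f^\dagger$ divides $x^n-1$ (as $(x^n-1)^\dagger=x^n-1$) and $\gcd(n,q)=1$; and (ii) you might remark that $f^\dagger$ is irreducible because $(\cdot)^\dagger$ is multiplicative and an involution, so a nontrivial factorization of $f^\dagger$ would pull back to one of $f$. With those two remarks made explicit the argument is complete.
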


Using the  analysis similar to those in \cite[Section 4]{SJLU2015}, the set of  SCRIM factors of $x^n-1$ over $\mathbb{F}_{q^2}$  is 
\begin{align}
\Omega_{q^2,n}=\bigcup_{ d|n,\lambda(q,d)=1}\{ \prod_{j\in Cl_{q^2,n}(i)}{(x-\alpha^j)} \mid  0\leq i<n \text{ and  } {\rm ord}_i(q^2)=d\}
\end{align}  and  the set of pairs of  CRIM polynomial pairs in the factorization of $x^n-1$ over $\mathbb{F}_{q^2}$  is 
\begin{align} 
\Lambda_{q^2,n}=\bigcup_{ d|n,\lambda(q,d)=0}\{ (\prod_{j\in Cl_{q^2,n}(i)}{(x-\alpha^j)} ,& \prod_{j\in Cl_{q^2,n}(-i)}{(x-\alpha^j)} )\mid  0\leq i< n \notag \\
&\text{ and  } {\rm ord}_i(q^2)=d\},
\end{align}
where $\alpha$ is a primitive $n$th root of unity and  \[\lambda(q,i)=\begin{cases}
1 &\text{if there exists an odd positive integer }e \text{ such that } i|(q^e+1),\\
0 &\text{if } i\nmid (q^e+1)  \text{ for all  odd positive integers }e .
\end{cases}
\]
Hence, the number of  SCRIM factors of $x^n-1$ over $\mathbb{F}_{q^2}$  is 
\begin{align}\label{formulaO}
|\Omega_{q^2,n}|= \sum_{d|n}\lambda(q,d)\frac{\phi(d)}{\mathrm{ord}_d(q^2)}
\end{align}  and  the number of pairs of  CRIM polynomial pairs in the factorization of $x^n-1$ over $\mathbb{F}_{q^2}$  is 
\begin{align} 
|\Lambda_{q^2,n}|=\sum_{d|n}(1-\lambda(q,d))\frac{\phi(d)}{2\mathrm{ord}_d(q^2)}.
\end{align}
The number of monic irreducible factors of $x^n-1 $ over $\mathbb{F}_{q^2}$  is 
\[\sum_{d|n}\frac{\phi(d)}{\mathrm{ord}_d(q^2)}=|\Omega_{q^2,n}|+2|\Lambda_{q^2,n}|. \]

Next, we focus on the following extreme cases where $x-1$ is the only SCRIM factor of $x^n-1$ over $\mathbb{F}_{q^2}$  and where  all the factors of $x^n-1$ over $\mathbb{F}_{q^2}$   are  SCRIM.

\begin{thm}\label{order}%thm2.4
    Let $l$ be an odd prime  such that  $l\nmid q$. Then all monic irreducible factors of $x^{l}-1$ over $\mathbb{F}_{q^2}$ are SCRIM  if and only if $\textrm{ord}_l{(q^2)}$ is odd but  $\textrm{ord}_l{(q)}$ is even.
\end{thm}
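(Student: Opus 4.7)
The plan is to reduce the SCRIM condition (via Lemma 2.1) to a divisibility statement about $q^m+1$ modulo $l$, then translate that into the multiplicative-order conditions on $q$ and $q^2$ modulo $l$.

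First I would observe that since $l \nmid q$ and $l$ is prime, we have the factorization $x^l - 1 = (x-1)\Phi_l(x)$, and a direct check shows $x-1$ is always SCRIM. So the statement reduces to determining when every irreducible factor of $\Phi_l(x)$ over $\mathbb{F}_{q^2}$ is SCRIM. Fix a primitive $l$th root of unity $\alpha$. The irreducible factors of $\Phi_l$ correspond to the cyclotomic cosets $Cl_{q^2,l}(i)$ with $1 \le i \le l-1$, all of size $\mathrm{ord}_l(q^2)$. By Lemma \ref{thm -qi}, the factor associated with $i$ is SCRIM if and only if $Cl_{q^2,l}(i) = Cl_{q^2,l}(-qi)$. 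Because $\gcd(i,l)=1$, multiplying by $i^{-1}$ modulo $l$ shows this is equivalent to the existence of $j\ge 0$ with $q^{2j} \equiv -q \pmod{l}$, i.e., $q^{2j-1} \equiv -1 \pmod{l}$. So the SCRIM condition for \emph{all} non-trivial factors collapses to the single requirement $\lambda(q,l)=1$: there exists an odd positive integer $m$ with $q^m \equiv -1 \pmod{l}$.

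Next I would convert ``$q^m \equiv -1 \pmod{l}$ for some odd $m$'' into the multiplicative-order criterion. Let $d = \mathrm{ord}_l(q)$ and $e = \mathrm{ord}_l(q^2)$, so that $e = d$ if $d$ is odd and $e = d/2$ if $d$ is even. Suppose $\lambda(q,l)=1$ with odd $m$ satisfying $q^m \equiv -1$. Squaring gives $d \mid 2m$ but $d \nmid m$, forcing $d$ to be even; write $d = 2e$. Since the only element of order $2$ in $(\mathbb{Z}/l\mathbb{Z})^\times$ is $-1$, we get $q^{e} \equiv -1 \pmod l$. From $d \mid 2m$ we get $e \mid m$; since $m$ is odd, $e$ must be odd as well. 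Thus $\mathrm{ord}_l(q)$ is even and $\mathrm{ord}_l(q^2)=e$ is odd. Conversely, if $d$ is even and $e = d/2$ is odd, then $q^{e} \equiv -1 \pmod l$ with $m = e$ odd, so $\lambda(q,l)=1$.

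Combining the two steps yields the theorem. The only non-mechanical step will be the passage between $\lambda(q,l)=1$ and the order condition; the key leverage is that $(\mathbb{Z}/l\mathbb{Z})^\times$ is cyclic with unique element of order $2$ (using that $l$ is an odd prime), which both produces the $-1$ from $q^{d/2}$ in one direction and controls the parity of $e$ in the other. No deeper obstacle is anticipated, since $l$ prime makes every $i \in \{1,\dots,l-1\}$ invertible mod $l$, collapsing the universal condition over $i$ to a single condition.
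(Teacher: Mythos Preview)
Your argument is correct. Both proofs start from Lemma~\ref{thm -qi}, but the organization differs in a way worth noting. You first collapse the universal condition over $i$ to a single congruence (using invertibility of $i$ modulo the prime $l$), obtaining the intermediate characterization ``some odd $m$ has $q^m\equiv -1\pmod l$'' (i.e., $\lambda(q,l)=1$), and then convert this into the order conditions via the cyclic structure of $(\mathbb{Z}/l\mathbb{Z})^\times$ and its unique element of order $2$. The paper instead argues the two directions separately and more computationally: in the forward direction it fixes a single $h$, extracts $v_l\mid(2j+1)$ to get $v_l$ odd, and rules out $\mathrm{ord}_l(q)=v_l$ by an ad hoc contradiction ending in $2h\equiv 0\pmod l$; the converse is the same as yours up to notation. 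Your route is slightly more conceptual and, as a bonus, your first reduction already contains the content of the paper's Proposition~\ref{cor2.9} (that for prime $l$ either all nontrivial factors are SCRIM or none are). One small cosmetic point: when you pass from $q^{2j}\equiv -q$ to $q^{2j-1}\equiv -1$, the case $j=0$ gives a negative exponent; it is cleaner to write the coset equality the other way around to obtain $q^{2j+1}\equiv -1$ directly, or simply note that in the finite cyclic group one may shift the exponent by a multiple of $2\,\mathrm{ord}_l(q)$ to make it a positive odd integer.
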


\begin{proof}
    Assume that all monic irreducible factors of $x^{l}-1$ are SCRIM. Then $Cl_{q^2,l}(i)=Cl_{q^2,l}(-qi)$  for all  $0\leq i<l$ by Lemma \ref{thm -qi}.  Let $0< h < l$ and $v_l=\textrm{ord}_l{(q^2)}$.  Then  $Cl_{q^2,l}(h) =Cl_{q^2,l}(-qh)$.  Precisely, there exists an integer $ j\geq 0$ such that
    \begin{align}
    \label{th1e1}    h\equiv (-qh)(q^{2j}) \mathrm{\, mod \, } l 
    \end{align}
    It follows that 
    $
    -qh \equiv h(q^{2j+2}) \mathrm{\,mod\, } l$ and hence
    $
    h\equiv hq^{2(2j+1)} \mathrm{\,mod\,} l$.
    Since $l$ is odd prime and $0< h < l$, we have $q^{2(2j+1)}\equiv 1 \mathrm{\,mod\, } l.$   Hence,  $v_l | (2j+1)$ which implies that  $v_l$ is odd.
    It is not difficult to verify that $\textrm{ord}_l{(q)}\in \{v_l,2v_l\}$. Suppose that    $\textrm{ord}_l{(q)}=v_l$.  Since $v_l | (2j+1)$,  we have $h \equiv q^{2j+1}h \mathrm{\,mod\,}l$. 
    From \eqref{th1e1},  it can be concluded that
    $
    h(-q^{2j+1}) \equiv hq^{2k+1} \mathrm{\,mod\,}l$. It follows that 
    $   2h q^{2j+1}  \equiv 0 \mathrm{\, mod\,}l$. Since  $l\nmid 2 q^{2j+1}$, we have  $l|h$.  This is a 
    contradiction because  $0<h<l$.
    Therefore, $\textrm{ord}_l{(q)}=2v_l$ as desired.
    
    Conversely, assume that $\textrm{ord}_l{(q^2)}=v_l$  is odd and $\textrm{ord}_l{(q)}=2v_l$.  Let $0\leq h<l$. Then $
    q^{v_l}\equiv -1 \mathrm{\, mod  \,}l$ which implies that 
    $hq^{v_l}\equiv -h \mathrm{\, mod  \,}l$ and $
    hq^{v_l+1}\equiv -hq \mathrm{\, mod  \,}l$. Sine $v_l$ is odd,  $v_l+1$ is even and 
    $ hq^{2(\frac{v_l+1}{2})}\equiv -hq \mathrm{\, mod  \,}l$.
    Consequently,   $Cl_{q^2,l}(h) =Cl_{q^2,l}(-qh) $.
    By Lemma \ref{thm -qi},  all monic irreducible factors of $x^l-1$ over $\mathbb{F}_{q^2}$ are SCRIM as desired. 
\end{proof}

\begin{prop}  \label{cor2.9}
    Let $l$ be an odd prime  coprime  to $q$. Then either  all monic irreducible factors of $x^{l}-1$ over $\mathbb{F}_{q^2}$ are SCRIM or  $x-1$ is the only SCRIM factor of $x^{l}-1$ over $\mathbb{F}_{q^2}$.
\end{prop}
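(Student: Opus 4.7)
The plan is to prove the dichotomy by showing that if any non-trivial irreducible factor of $x^l-1$ over $\mathbb{F}_{q^2}$ is SCRIM, then every irreducible factor of $x^l-1$ is SCRIM. Since $l$ is an odd prime with $\gcd(l,q)=1$, the irreducible factors of $x^l-1$ correspond to the $q^2$-cyclotomic cosets $Cl_{q^2,l}(i)$ for $0\le i<l$. The coset $\{0\}$ yields $x-1$, which is trivially SCRIM, while the remaining cosets are formed from residues $i$ with $\gcd(i,l)=1$.

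The key step is to reformulate the SCRIM condition. By Lemma \ref{thm -qi}, the factor attached to $Cl_{q^2,l}(i)$ is SCRIM iff $Cl_{q^2,l}(i)=Cl_{q^2,l}(-qi)$, equivalently iff there exists an integer $j\ge 0$ such that $i\equiv -q\cdot q^{2j}\cdot i\pmod{l}$. For $1\le i<l$ the residue $i$ is invertible modulo $l$, so after cancelling $i$ this reduces to $q^{2j+1}\equiv -1\pmod{l}$, a condition that depends only on $q$ and $l$ and no longer involves $i$.

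Consequently, if some $j\ge 0$ satisfies $q^{2j+1}\equiv -1\pmod{l}$, the same $j$ witnesses $Cl_{q^2,l}(i)=Cl_{q^2,l}(-qi)$ for every $1\le i<l$, so every irreducible factor of $x^l-1$ is SCRIM. Otherwise, no $1\le i<l$ yields a SCRIM factor, leaving $x-1$ as the sole SCRIM factor of $x^l-1$. This gives exactly the claimed dichotomy.

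I expect no serious obstacle here: the argument rests entirely on the observation that the primality of $l$ makes the SCRIM condition in Lemma \ref{thm -qi} independent of the choice of the nonzero residue $i$. As an alternative route, one could instead invoke Theorem \ref{order} by checking that the existence of a non-trivial SCRIM factor forces $\textrm{ord}_l(q^2)$ to be odd while $\textrm{ord}_l(q)$ is even (derived from $q^{2j+1}\equiv -1\pmod{l}$ by analyzing the $2$-adic valuation of $\textrm{ord}_l(q)$), but the direct approach above is shorter and self-contained.
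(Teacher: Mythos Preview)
Your proof is correct and follows essentially the same approach as the paper: both argue that if some nonzero $h$ satisfies $Cl_{q^2,l}(h)=Cl_{q^2,l}(-qh)$, then cancelling the unit $h$ modulo the prime $l$ yields $q^{2j+1}\equiv -1\pmod l$, a condition independent of $h$, so every $i$ satisfies the SCRIM criterion of Lemma~\ref{thm -qi}. The only cosmetic difference is that you phrase the cancellation as ``$i$ is invertible'' while the paper writes $h(1+q^{2j+1})\equiv 0$ and uses that $l\nmid h$.
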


\begin{proof} Assume that   $x^l-1$  contains  more than one SCRIM factor in $\mathbb{F}_{q^2}[x]$. By Lemma \ref{thm -qi},   there exists an integer  $0< h < l$ such that  $Cl_{q^2,l}(h) =Cl_{q^2,l}(-qh)$.  Then there exists an integer  $j \geq 0$ such that $  h\equiv (-qh)(q^{2j}) \mathrm{\, mod \, } l$. Hence, 
    $
    h(1+q^{2j+1})\equiv 0 \mathrm{\, mod \, } l$. 
    Since $l$ is prime and $0< h < l$,  it follows that $(1+q^{2j+1})\equiv 0 \mathrm{\,mod\, } l$. Equivalently,  $-q^{2j+1}\equiv 1 \mathrm{\,mod\, } l$.
    Thus,  $
    i\equiv (-qi)q^{2j} \mathrm{\, mod\,}l$ for all $0\leq i <l$.
    Therefore,  $Cl_{q^2,l}(i)  =Cl_{q^2,l}(-qi)$ for all $0\leq i <l$.
    Hence,  all monic irreducible factors of $x^l-1$ over $\mathbb{F}_{q^2}$ are SCRIM.
\end{proof}

\begin{cor} \label{cor-1-1}
    Let $n$ be an odd positive integer coprime to $q$. Then  $x-1$ is the only SCRIM factor  of $x^{n}-1$ over $\mathbb{F}_{q^2}$  if and only if  for each prime divisor $l$ of $n$,  $\textrm{ord}_l{(q^2)}$ is even or     $\textrm{ord}_l{(q)}$ is odd.
\end{cor}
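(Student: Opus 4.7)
The plan is to reduce the global condition on $x^n-1$ to pointwise conditions on $x^l-1$ for each prime divisor $l$ of $n$, and then invoke Theorem~\ref{order} and Proposition~\ref{cor2.9}. Observe first that combining those two results yields a clean dichotomy for an odd prime $l$ with $\gcd(l,q)=1$: since either all monic irreducible factors of $x^l-1$ are SCRIM or $x-1$ is the only SCRIM factor, and by Theorem~\ref{order} the former happens exactly when $\textrm{ord}_l(q^2)$ is odd and $\textrm{ord}_l(q)$ is even, the latter is equivalent to $\textrm{ord}_l(q^2)$ being even or $\textrm{ord}_l(q)$ being odd.

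For the forward direction, I would use that $x^l-1$ divides $x^n-1$ whenever $l\mid n$, so every SCRIM factor of $x^l-1$ is automatically a SCRIM factor of $x^n-1$. Hence if $x-1$ is the only SCRIM factor of $x^n-1$, the same holds for each $x^l-1$ with $l$ a prime divisor of $n$, and the above dichotomy supplies the stated necessary condition.

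For the converse, I would argue contrapositively: assuming that there exists a SCRIM factor of $x^n-1$ other than $x-1$, I will produce a prime divisor $l$ of $n$ for which $\textrm{ord}_l(q^2)$ is odd and $\textrm{ord}_l(q)$ is even. By Lemma~\ref{thm -qi}, such a factor corresponds to some $0<i<n$ with $Cl_{q^2,n}(i)=Cl_{q^2,n}(-qi)$, which produces an integer $j\geq 0$ satisfying $i(1+q^{2j+1})\equiv 0 \pmod{n}$. Writing $m=n/\gcd(i,n)$ and $i=\gcd(i,n)\cdot i'$ with $\gcd(i',m)=1$, the divisibility reduces to $m\mid 1+q^{2j+1}$. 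Any prime divisor $l$ of $m$ then divides $n$ and satisfies $q^{2j+1}\equiv -1\pmod{l}$; repeating the argument in the proof of Proposition~\ref{cor2.9} shows that all monic irreducible factors of $x^l-1$ over $\mathbb{F}_{q^2}$ are SCRIM, and Theorem~\ref{order} then supplies the desired $l$.

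The main obstacle, which is largely bookkeeping, lies in this reduction step: deriving $l\mid 1+q^{2j+1}$ from $n\mid i(1+q^{2j+1})$ requires peeling off $\gcd(i,n)$, and one must verify that $0<i<n$ forces $m=n/\gcd(i,n)>1$ so that a prime divisor $l$ of $m$ actually exists. Once this reduction is in place, the corollary follows almost immediately from the prior prime-level results.
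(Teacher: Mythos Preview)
Your proposal is correct and follows essentially the same approach as the paper: both directions reduce to the prime-level dichotomy via Theorem~\ref{order} and Proposition~\ref{cor2.9}, using $x^l-1\mid x^n-1$ for one direction and the cyclotomic-coset criterion of Lemma~\ref{thm -qi} for the other. Your reduction step in the converse---extracting a prime $l$ dividing $m=n/\gcd(i,n)$ so that $l\mid 1+q^{2j+1}$---is in fact more explicitly justified than the paper's one-line assertion that such an $l$ exists.
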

\begin{proof}   Assume  that there exists a prime divisor $l$ of $n$ such that $\textrm{ord}_l{(q^2)}$ is odd and   $\textrm{ord}_l{(q)}$ is even. Then by Theorem \ref{order}, all monic irreducible factors of $x^{l}-1$ over $\mathbb{F}_{q^2}$ are SCRIM. Since $(x^l-1)|(x^n-1)$,   $x^n-1$ contains more than one SCRIM factor over~$\mathbb{F}_{q^2}$.
    
    Conversely,  
    assume  that for each prime divisor $l$ of $n$, $\textrm{ord}_l(q^2)$ is even or $\textrm{ord}_l(q)$ is odd.  By Theorem  \ref{order} and Proposition \ref{cor2.9},  $x-1$ is the only   SCRIM factor of $x^l-1$ over $\mathbb{F}_{q^2}$. Suppose that  $x^{n}-1$ contains more than one SCRIM factor over $\mathbb{F}_{q^2}$.  By Lemma \ref{thm -qi}, there exists an integer $0<i<n$  such that $Cl_{q^2,n}(i)  =Cl_{q^2,n}(-qi)$.   %If    $Cl_{q^2,l}(i)  = Cl_{q^2,l}(0 ) $  for  all prime divisor $l$ of $n$, then  ... . Hence,   
    Then  there exists a prime divisor $l$ of $n$ such that   $Cl_{q^2,l}(0 ) \ne Cl_{q^2,l}(i)  =Cl_{q^2,l}(-qi)$ which implies that  $x^{l}-1$ contains more than one SCRIM factor over $\mathbb{F}_{q^2}$, a contradiction.
    Therefore, $x-1$ is the only  SCRIM factor of $x^n-1$ over $\mathbb{F}_{q^2}$ as desired. 
\end{proof}

The following lemmas from \cite{M1997} are useful.

\begin{lemma}
    [{\cite[Proposition 4]{M1997}}]\label{ord} Let $q$ be a prime power and let $l$ be an odd prime such that $l\nmid q$.  Let  $r$ be a positive integer. Then ${\rm ord}_{l^r}(q)={\rm ord}_{l}( q)p^i$ for some $i\geq 0$. 
\end{lemma}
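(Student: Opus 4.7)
The plan is to leverage the fact that for an odd prime $l$ with $l\nmid q$, the group $(\mathbb{Z}/l^{r}\mathbb{Z})^{*}$ is cyclic of order $l^{r-1}(l-1)$, and to sandwich $D:={\rm ord}_{l^{r}}(q)$ between $d:={\rm ord}_{l}(q)$ and $d\cdot l^{r-1}$. Once both $d\mid D$ and $D\mid d\, l^{r-1}$ are established, the ratio $D/d$ is a divisor of $l^{r-1}$; since $l$ is prime, this forces $D/d=l^{i}$ for some $0\le i\le r-1$, which is exactly the claim (reading the ``$p^{i}$'' in the statement as $l^{i}$).

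The divisibility $d\mid D$ is immediate: reduction modulo $l$ gives a group homomorphism $(\mathbb{Z}/l^{r}\mathbb{Z})^{*}\to(\mathbb{Z}/l\mathbb{Z})^{*}$, so $q^{D}\equiv 1\pmod{l^{r}}$ descends to $q^{D}\equiv 1\pmod{l}$. The substantive step is the matching upper bound $q^{d\, l^{r-1}}\equiv 1\pmod{l^{r}}$. I would write $q^{d}=1+l^{s}m$ with $s\ge 1$ and $\gcd(m,l)=1$, and expand
\[
(1+l^{s}m)^{l}=1+l^{s+1}m+\binom{l}{2}l^{2s}m^{2}+\dots +l^{ls}m^{l}.
\]
Because $l$ is odd, the $l$-adic valuation of $\binom{l}{j}$ is at least $1$ for $1\le j\le l-1$, and one checks term by term that every summand after the first two has $l$-adic valuation at least $s+2$ (the tightest case $j=2$ gives valuation $1+2s\ge s+2$ precisely because $s\ge 1$). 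Thus $q^{d\, l}=1+l^{s+1}m_{1}$ with $m_{1}\equiv m\pmod{l}$, hence $\gcd(m_{1},l)=1$, and a straightforward induction on $k$ produces $q^{d\, l^{k}}=1+l^{s+k}m_{k}$ with $\gcd(m_{k},l)=1$ for every $k\ge 0$. Taking $k=r-1$ and using $s\ge 1$ yields $q^{d\, l^{r-1}}\equiv 1\pmod{l^{r}}$, i.e., $D\mid d\, l^{r-1}$.

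Combining the two divisibilities as in the first paragraph completes the proof. The main obstacle is the lifting step in paragraph two: the hypothesis that $l$ is odd enters crucially through $l\mid\binom{l}{2}$, which is precisely why the analogous lifting-the-exponent statement fails for the prime $2$ and why the lemma is restricted to odd primes.
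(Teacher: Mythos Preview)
Your proof is correct. The paper does not actually supply a proof of this lemma; it is quoted verbatim from \cite[Proposition~4]{M1997} and used as a black box, so there is no in-paper argument to compare against. Your approach is the standard one: bound $D={\rm ord}_{l^{r}}(q)$ below by $d={\rm ord}_{l}(q)$ via the reduction map, bound it above by $d\,l^{r-1}$ via the lifting-the-exponent computation, and conclude that $D/d$ is a power of $l$. The valuation bookkeeping in your second paragraph is accurate, including the observation that $l$ odd is needed so that $v_{l}\bigl(\binom{l}{2}\bigr)=1$ (and hence the $j=2$ term has valuation $1+2s\ge s+2$). You are also right to read the ``$p^{i}$'' in the stated lemma as ``$l^{i}$''; this is a typographical slip in the paper, since no prime $p$ is in scope and the intended content (and the way the lemma is used in the proof of Theorem~\ref{char-n}) is precisely that ${\rm ord}_{l^{r}}(q)/{\rm ord}_{l}(q)$ is a power of the odd prime~$l$.
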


\begin{lemma}
    [{\cite[Lemma 1]{M1997}}]\label{congr} Let $a_1,a_2,\dots,a_t$ be positive integers. Then the system of congruences 
    \begin{align*}
    x\equiv a_1\,({\rm mod}\, 2a_1), \quad x\equiv a_2\,({\rm mod}\, 2a_2), \quad \dots, \quad x\equiv a_t\,({\rm mod}\, 2a_t) 
    \end{align*}
    has a solution $x$ if and only if there exists $s\geq 0$ such that $2^s|| a_i$ for all $i=1,2,\dots, t$. 
\end{lemma}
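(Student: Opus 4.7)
The plan is to prove both implications through a careful analysis of the $2$-adic valuation $v_{2}$ of any putative solution $x$, together with a splitting of each congruence into its $2$-part and its odd part via the Chinese Remainder Theorem. The statement is really saying that $x \equiv a_{i}\,({\rm mod}\, 2 a_{i})$ forces $x$ to equal $a_{i}$ times an odd integer, which completely pins down $v_{2}(x)$; so compatibility across $i$ reduces to a single arithmetic requirement on the $a_{i}$.

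For the necessity, I would rewrite $x \equiv a_{i}\,({\rm mod}\, 2 a_{i})$ as $x = a_{i}(1 + 2 k_{i})$ for some integer $k_{i}$, so that $x$ equals $a_{i}$ times an odd number. This immediately forces $v_{2}(x) = v_{2}(a_{i})$ for every $i$, and hence there is a common integer $s \geq 0$ with $2^{s} || a_{i}$ for all $i = 1, 2, \dots, t$. For the sufficiency, write $a_{i} = 2^{s} b_{i}$ with each $b_{i}$ odd, so that $2 a_{i} = 2^{s+1} b_{i}$ with $\gcd(2^{s+1}, b_{i}) = 1$. The congruence $x \equiv a_{i}\,({\rm mod}\, 2 a_{i})$ then splits, by CRT, into the pair $x \equiv 2^{s} b_{i}\,({\rm mod}\, 2^{s+1})$ and $x \equiv 2^{s} b_{i}\,({\rm mod}\, b_{i})$. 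Because $b_{i}$ is odd, the first collapses to the single condition $x \equiv 2^{s}\,({\rm mod}\, 2^{s+1})$, which is the same for every $i$, while the second is simply $x \equiv 0\,({\rm mod}\, b_{i})$. Combining across $i$, I need $x \equiv 0\,({\rm mod}\, \mathrm{lcm}(b_{1}, \dots, b_{t}))$ together with $x \equiv 2^{s}\,({\rm mod}\, 2^{s+1})$, and these are compatible because $\mathrm{lcm}(b_{1}, \dots, b_{t})$ is odd and hence coprime to $2^{s+1}$; one last application of CRT produces the desired $x$.

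There is no serious obstacle here, and the argument is essentially linear. The one observation that does all the work is that $x \equiv a_{i}\,({\rm mod}\, 2 a_{i})$ exactly pins down $v_{2}(x)$ to equal $v_{2}(a_{i})$; without it, the necessity direction would look mysterious, but with it both directions reduce to a routine bookkeeping exercise using CRT to separate the prime $2$ from the odd primes.
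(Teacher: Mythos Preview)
Your proof is correct. The necessity direction is exactly the right observation: writing $x = a_i(1+2k_i)$ with $1+2k_i$ odd forces $v_2(x) = v_2(a_i)$ for every $i$, so all the $a_i$ must share the same $2$-adic valuation. The sufficiency direction is equally clean: splitting each modulus $2a_i = 2^{s+1}b_i$ into its $2$-power part and its odd part via CRT reduces the whole system to the single congruence $x \equiv 2^s \pmod{2^{s+1}}$ together with $x \equiv 0 \pmod{\mathrm{lcm}(b_1,\dots,b_t)}$, and these are compatible because the latter modulus is odd.

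As for comparison with the paper: there is nothing to compare. The paper does not prove this lemma at all; it is simply quoted from \cite[Lemma~1]{M1997} and used as a black box in the proof of Theorem~\ref{char-n}. Your argument therefore supplies what the paper omits, and does so by the natural route.
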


Some characterizations of SCRIM factors of  $x^{n}-1$ over $\mathbb{F}_{q^2}$ are given in the next theorem. 
\begin{thm}\label{char-n}%thm2.4
    Let $n$ be an odd   integer coprime to $q$. Then the following statements are equivalent.
    \begin{enumerate}[$1)$]
        \item  All monic irreducible factors of $x^{n}-1$ over $\mathbb{F}_{q^2}$ are SCRIM.
        \item  For each  prime divisor $l$  of $n$,  all monic irreducible factors of $x^{l}-1$ over $\mathbb{F}_{q^2}$ are SCRIM.
        \item  For each  prime divisor $l$  of $n$, $\textrm{ord}_l{(q^2)}$ is odd but    $\textrm{ord}_l{(q)}$ is even.
    \end{enumerate} 
\end{thm}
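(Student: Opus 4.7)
The plan is to prove the chain $(1) \Rightarrow (2) \Rightarrow (3) \Rightarrow (1)$. Two of these implications are essentially free. For $(1) \Rightarrow (2)$, note that for each prime divisor $l$ of $n$, the polynomial $x^{l}-1$ divides $x^{n}-1$, so every monic irreducible factor of $x^{l}-1$ appears as a factor of $x^{n}-1$ and is therefore SCRIM by hypothesis. The equivalence $(2) \Leftrightarrow (3)$ is just Theorem \ref{order} applied separately to each prime divisor $l$ of $n$.

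The substantive content is $(3) \Rightarrow (1)$, and this is where I would use Lemmas \ref{ord} and \ref{congr}. Write $n = l_{1}^{r_{1}} \cdots l_{t}^{r_{t}}$ with distinct odd primes $l_{k} \nmid q$. Condition $(3)$ says $\mathrm{ord}_{l_{k}}(q) = 2\,\mathrm{ord}_{l_{k}}(q^{2})$ with $\mathrm{ord}_{l_{k}}(q^{2})$ odd, so $2 \,\|\, \mathrm{ord}_{l_{k}}(q)$. By Lemma \ref{ord}, $\mathrm{ord}_{l_{k}^{r_{k}}}(q) = \mathrm{ord}_{l_{k}}(q)\cdot l_{k}^{i_{k}}$ for some $i_{k} \geq 0$, and since $l_{k}$ is odd, this lift preserves the $2$-part: writing $\mathrm{ord}_{l_{k}^{r_{k}}}(q) = 2a_{k}$, each $a_{k}$ is odd. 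From $\mathrm{ord}_{l_{k}^{r_{k}}}(q) = 2a_{k}$ we also get $q^{a_{k}} \equiv -1 \pmod{l_{k}^{r_{k}}}$ (the only alternative $q^{a_{k}}\equiv 1$ would contradict the order).

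Next I would apply Lemma \ref{congr} to the system $m \equiv a_{k} \pmod{2a_{k}}$ for $k = 1, \ldots, t$: all $a_{k}$ are odd, so they share the $2$-adic valuation $2^{0}$, and the lemma gives a common solution $m$. By the Chinese Remainder Theorem this $m$ satisfies $q^{m} \equiv -1 \pmod{n}$; moreover $m$ is odd, since $m \equiv a_{k} \pmod 2$ and $a_{k}$ is odd. Writing $m = 2j+1$, for every $0 \leq h < n$ we get $-qh \equiv q^{m+1}h = (q^{2})^{j+1}h \pmod n$, so $-qh \in Cl_{q^{2},n}(h)$ and hence $Cl_{q^{2},n}(h) = Cl_{q^{2},n}(-qh)$. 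Lemma \ref{thm -qi} then yields $(1)$.

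The main obstacle is the $(3) \Rightarrow (1)$ step, specifically verifying that condition $(3)$ on the prime divisors correctly propagates to the prime-power divisors and then assembles via Lemma \ref{congr} into a single odd exponent $m$ with $q^{m} \equiv -1 \pmod{n}$; the delicate point is keeping track of the $2$-part of $\mathrm{ord}_{l_{k}^{r_{k}}}(q)$ through Lemma \ref{ord} so that Lemma \ref{congr} is applicable. Once the exponent $m$ is in hand, the conclusion via Lemma \ref{thm -qi} is immediate.
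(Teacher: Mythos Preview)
Your proof is correct and follows essentially the same approach as the paper: the same chain $(1)\Rightarrow(2)\Rightarrow(3)\Rightarrow(1)$, with the substantive step $(3)\Rightarrow(1)$ handled by using Lemma~\ref{ord} to lift $2\,\|\,\mathrm{ord}_{l_k}(q)$ to $2\,\|\,\mathrm{ord}_{l_k^{r_k}}(q)$, then Lemma~\ref{congr} to produce a single odd exponent $m$ with $q^m\equiv -1\pmod{n}$, and finally Lemma~\ref{thm -qi}. Your write-up is in fact slightly cleaner than the paper's (which contains the typo ``$c+1$ is odd'' where ``even'' is meant).
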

\begin{proof}
    To prove $1) \Rightarrow 2)$, assume that all monic irreducible factors of $x^{n}-1$ over $\mathbb{F}_{q^2}$ are SCRIM. Let $l$ be a prime divisor of $n$.   Since $(x^l-1) |(x^n-1)$,  all monic irreducible factors of $x^{l}-1$ over $\mathbb{F}_{q^2}$ are SCRIM as desired.
    
    From Theorem \ref{order}, we have   $2) \Rightarrow 3$.
    
    To prove $3)  \Rightarrow 1)$, assume that  $\textrm{ord}_l{(q^2)}$ is odd but   $\textrm{ord}_l{(q)}$ is even for all  prime divisors $l$  of $n$. Since $\textrm{ord}_l{(q)} \in \{ \textrm{ord}_l{(q^2)}, 2\textrm{ord}_l{(q^2)}\}$, we have   $2|| {\rm ord}_{l}(q)$. 
    Assume that $n=l_1^{r_1}l_2^{r_2}\dots l_t^{r_t}$, where $l_1$, $l_2$, $\dots$, $l_t$ are distinct primes and $r_i\geq 1$ for all $1\leq i\leq t$.  
    Then  $2|| {\rm ord}_{l_i}(q)$ for every prime $l_i$ dividing $n$. Then, by Lemma~\ref{ord}, we have $2|| {\rm ord}_{l_i^{r_i}}(q)$ for all $i=1,2,\dots,t$. By Lemma~\ref{congr}, there exists an integer $c$ such that
    $c\equiv \frac{{\rm ord}_{l_i^{r_i}}( q)}{2} \,({\rm mod}\, {\rm ord}_{l_i^{r_i}}( q)) $
    for all $i=1,2,\dots,t$. Since $\frac{{\rm ord}_{l_i^{r_i}}( q)}{2}$ is odd, $c$ is an odd integer. Thus $q^c\equiv -1 \,({\rm mod}\, l_i^{r_i})$ for all $i=1,2,\dots,t $ which implies that  $q^c\equiv -1 \,({\rm mod}\, n)$.   Then $c+1$ is odd and   $\left({q^2}\right)^{\frac{c+1}{2}}\equiv  q^{c+1} \equiv -q \,({\rm mod}\, n)$.  Consequently,  $Cl_{q^2,n}(h)=Cl_{q^2,n}(-qh) $ for all $0\leq h<n$.
    By Lemma \ref{thm -qi}, every  monic irreducible factor of $x^{n}-1$ over $\mathbb{F}_{q^2}$ is SCRIM.
\end{proof}

\subsection{Enumeration of SCRIM Factors of $x^{n}-1$ over $\mathbb{F}_{q^2}$}

First, note that the number of   SCRIM factors of $x^n-1$  over $\mathbb{F}_{q^2}$ can be  determined using \eqref{formulaO}. However, the said formula is not easily computed. The purpose of this subsection is to determined recursive formulas for $|\Omega_{q^2,n}|$ which may reduce the complexity in the computation. 

First,  we focus on the case where $n$ is even and determine the number  $|\Omega_{q^2,n}|$ of CRIM factors of $x^{n}-1$ over $\mathbb{F}_{q^2} $  in terms of $|\Omega_{q^2,n^\prime}|$ and  $m$, where $n^\prime $ is the odd positive integer such that $n=2^mn^\prime$.

\begin{thm} \label{evenInt}%thm2.8
    Let $n$ be a positive integer relatively prime to $q$ such that $n=2^mn'$ where $n'$ is odd and $m\geq 0$.  Let $r$ be the positive integer such that $2^r||(q+1)$. Then the number of SCRIM factors of $x^{n}-1$ over $\mathbb{F}_{q^2}$ is
    \begin{displaymath}
    |\Omega_{q^2,n}|=\left \{\begin{array}{ll}
    2^m|\Omega_{q^2,n'} | & \textrm{if\,\,} 0\leq m \leq r,\\
    2^r |\Omega_{q^2,n'}| & \textrm{if\,\,} m>r.
    \end{array}\right.
    \end{displaymath}
    In particular, if $n^\prime=1$, then  $|\Omega_{q^2,n'} | =1$ and     \begin{displaymath}
    |\Omega_{q^2,2^m}|=\left \{\begin{array}{ll}
    2^m & \textrm{if\,\,} 0\leq m \leq r,\\
    2^r& \textrm{if\,\,} m>r.
    \end{array}\right.
    \end{displaymath}
\end{thm}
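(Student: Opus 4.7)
The plan is to combine the explicit formula \eqref{formulaO} for $|\Omega_{q^2,n}|$ with a careful decomposition of the divisors of $n = 2^m n'$ in the form $d = 2^a d'$, where $0 \leq a \leq m$ and $d' \mid n'$. Since $\gcd(2^a, d') = 1$, multiplicativity of $\phi$ gives $\phi(d) = \phi(2^a)\phi(d')$ and CRT gives $\mathrm{ord}_d(q^2) = \mathrm{lcm}(\mathrm{ord}_{2^a}(q^2), \mathrm{ord}_{d'}(q^2))$; thus the proof reduces to controlling $\lambda(q, 2^a d')$ and this lcm in terms of the data for $d'$ and the parameter $r$.

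First I would dispense with the trivial case $m = 0$ (where $n = n'$), and then assume $m \geq 1$, so that $q$ is forced to be odd by $\gcd(n,q) = 1$ and hence $r \geq 1$ is well defined. The key arithmetic input is a $2$-adic valuation computation: for any odd $e \geq 1$, the factorization
\[
q^e + 1 = (q+1)\bigl(q^{e-1} - q^{e-2} + \cdots - q + 1\bigr)
\]
has second factor equal to a sum of $e$ odd terms, hence odd, so $v_2(q^e + 1) = v_2(q + 1) = r$ independently of $e$. From this and CRT I deduce: if $a \leq r$, then $\lambda(q, 2^a d') = \lambda(q, d')$, since any odd $e$ witnessing $d' \mid q^e + 1$ automatically satisfies $2^a \mid q^e + 1$; if $a > r$, then $\lambda(q, 2^a d') = 0$, because no odd $e$ can produce $v_2(q^e + 1) \geq a$.

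Next I would control the orders. When $\lambda(q, d') = 1$, fix an odd $e$ with $q^e \equiv -1 \pmod{d'}$; then $(q^2)^e \equiv 1 \pmod{d'}$ shows that $\mathrm{ord}_{d'}(q^2)$ divides an odd integer and is therefore odd. When $a \leq r$, the congruence $2^a \mid q + 1$ yields $q^2 \equiv 1 \pmod{2^a}$, so $\mathrm{ord}_{2^a}(q^2) = 1$. Combining these, for each contributing divisor $d = 2^a d'$ one has $\mathrm{ord}_{2^a d'}(q^2) = \mathrm{ord}_{d'}(q^2)$, so its contribution to \eqref{formulaO} simplifies to $\phi(2^a)\phi(d')/\mathrm{ord}_{d'}(q^2)$.

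Substituting into \eqref{formulaO} and exchanging the order of summation gives
\[
|\Omega_{q^2,n}| = \sum_{\substack{d' \mid n' \\ \lambda(q, d') = 1}} \frac{\phi(d')}{\mathrm{ord}_{d'}(q^2)} \sum_{a=0}^{\min(m, r)} \phi(2^a).
\]
Using the identity $\sum_{a=0}^{k} \phi(2^a) = 2^k$ (a consequence of $\sum_{e \mid N}\phi(e) = N$ applied to $N = 2^k$), the inner sum equals $2^{\min(m,r)}$, yielding the claimed recursion. The special case $n' = 1$ follows because $|\Omega_{q^2,1}| = 1$ (the factor $x - 1$ is itself SCRIM). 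The main obstacle is the coupled control of $\lambda$ and the multiplicative orders: verifying that $\mathrm{ord}_{d'}(q^2)$ is odd whenever $\lambda(q, d') = 1$ is precisely what forces the lcm not to grow when the prime power $2^a$ is attached, and without this oddness the clean product structure $|\Omega_{q^2,n}| = 2^{\min(m,r)}|\Omega_{q^2,n'}|$ would fail.
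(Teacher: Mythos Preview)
Your argument is correct and takes a genuinely different route from the paper. The paper works \emph{combinatorially} with cyclotomic cosets: for $1\le m\le r$ it exhibits, for each $i$ with $Cl_{q^2,n'}(i)=Cl_{q^2,n'}(-qi)$, the $2^m$ distinct lifts $Cl_{q^2,2^mn'}(i+kn')$ ($0\le k<2^m$) and checks in four steps that these are exactly the SCRIM cosets modulo $2^mn'$; for $m>r$ it argues separately that no new SCRIM factor appears beyond those of $x^{2^rn'}-1$, using the parity of $|Cl_{q^2,n}(i)|$ and the constraint $f(0)^{q+1}=1$. Your approach instead plugs directly into the closed formula \eqref{formulaO}, decomposes each divisor as $2^ad'$, and uses the single $2$-adic observation $v_2(q^e+1)=r$ for all odd $e$ to evaluate $\lambda(q,2^ad')$ and to see that $\mathrm{ord}_{2^a}(q^2)=1$ for $a\le r$, after which the double sum factors. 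This is shorter and more uniform (both cases $m\le r$ and $m>r$ fall out of the same computation), whereas the paper's proof is more constructive in that it actually identifies the SCRIM cosets rather than merely counting them. One small remark: your closing claim that the oddness of $\mathrm{ord}_{d'}(q^2)$ is ``precisely what forces the lcm not to grow'' overstates its role---once you know $\mathrm{ord}_{2^a}(q^2)=1$, the lcm equals $\mathrm{ord}_{d'}(q^2)$ regardless---so that step, while true, is not actually needed for the count.
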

\begin{proof} We consider the following two cases. 
    
    \noindent{\bf  Case 1:} $1\leq m \leq r$.
    Let $0\leq i<n'$ be  integer such that $Cl_{q^2,n^\prime}(i) =Cl_{q^2,n^\prime}(-qi)$.  Claim that   the  monic irreducible polynomials defined  corresponding to $Cl_{q^2,2^mn'}(i) , Cl_{q^2,2^mn'}(i+n') ,Cl_{q^2,2^mn'}(i+2n') ,\dots ,Cl_{q^2,2^mn'}(i+(2^m-1)n')$  are  the distinct  SCRIM factors  of $x^n-1$ over $\mathbb{F}_{q^2}$.  Consider the following steps.
    
    \noindent{\bf Step I}: Show that $Cl_{q^2,2^mn'}(i) =Cl_{q^2,2^mn'}(-qi) .$ 
    Since $Cl_{q^2,n'}(i) =Cl_{q^2,n'}(-qi) $, there exists an integer  $j\geq 0$ such that,
    \begin{align}\label{th2e1}
    i&\equiv (-qi)q^{2j}  \equiv  -i q^{2j+1}\textrm{\,mod \,} n'.
    \end{align}
    %  It suffices to show $2^m|(1+q^{2j+1}).$\\
    Since $2^m|2^r$ and $2^r||(q+1)$,  we have $2^m|(q+1)$. Precisely,  $q=2^mk-1$ for some positive integer $k$.  Then
    \begin{align*}
    q^{2j+1}+1&=(2^mk-1)^{2j+1}+1\\
    &=(2^mk-1)^{2j+1}+1^{2j+1}\\
    &=(2^mk-1 +1)\sum_{t=0}^ {2j}(-1)^t(2^mk-1)^{2j-t}\\
    &=2^mk\sum_{t=0}^ {2j}(-1)^t(2^mk-1)^{2j-t}
    \end{align*}
    It follows that     $ 1\equiv  -q^{2j+1} \, \textrm{mod}\,2^m$.
    Together with  \eqref{th2e1},  we have $i\equiv (-qi)q^{2j} \textrm{\,mod \,}2^mn'$. Consequently,   $Cl_{q^2,2^mn'}(i) =Cl_{q^2,2^mn'}(-qi) $.

    \noindent{\bf Step II}: Show that $Cl_{q^2,2^mn'}(i+kn') =Cl_{q^2,2^mn'}(-q(i+kn')) $ for all $1\leq k\leq 2^m-1.$ Let $1\leq k\leq 2^m-1$ be an integer. Since $ 1 +q^{2j+1} \equiv 0 \, \textrm{mod}\,2^m$ and $n'|kn'$,  we have $kn'(1+q^{2j+1}) \equiv 0\, \textrm{mod}\, 2^mn'$.  It follows that
    \begin{align*}
    (i+kn')+q^{2j+1}(i+kn')&\equiv  (1+q^{2j+1})(i+kn' )\textrm{\,mod \,} 2^mn'\\
    &\equiv i(1+q^{2j+1})+kn'(1+q^{2j+1}) \textrm{\,mod \,} 2^mn'\\
    &\equiv 0 \textrm{\,mod \,} 2^mn'
    \end{align*}
    and hence, 
    $
    (i+kn')\equiv -q(i+kn') q^{2j} \textrm{\,mod \,} 2^mn'$.  As desired,  we have 
    $Cl_{q^2,2^mn'}(i+kn' )=Cl_{q^2,2^mn'}(-q(i+kn')) 
    $ for all $1\leq k\leq 2^m-1$.

    \noindent{\bf Step III}: Show that $Cl_{q^2, 2^mn'}(i+kn') \neq Cl_{q^2,2^mn'}(i+hn') $ for all $ 0\leq k<h\leq 2^m-1$.
    Suppose  that there exist integers  $ 0\leq k<h\leq 2^m-1$ such that \[Cl_{q^2,2^mn'}(i+kn') =Cl_{q^2,2^mn'}(i+hn') .\]
    Then there exists an integer $j$ such that 
    $ i+kn'\equiv -q(i+hn')q^{2j} {\textrm{mod }2^mn'} $ which implies that   $ i+kn'\equiv -q(i+hn')q^{2j} {\textrm{mod }2^m} $.  It follows that 
    \begin{align} \label{eq-kh}
    i(1+q^{2j+1})+n'(k+hq^{2j+1})&\equiv 0{\textrm{ mod }2^m}.
    \end{align}
    Since $2^m|(1+q^{2j+1})$, we have  $2^m|i(1+q^{2j+1})$ and hence $2^m|n'(k+hq^{2j+1})$ by \eqref{eq-kh}.   Since $n'$ is odd,  it can be concluded that $2^m|(k+hq^{2j+1})$.   Since $k<h$, we have have $h=k+a$ for some positive integer $d\leq 2^m-1$. Hence, 
    
    \[k+hq^{2j+1}= k + (k+a) q^{2j+1}= k(q^{2j+1}+1) +aq^{2j+1}.\]
    Since $2^m|(k+hq^{2j+1}) $ and $2^m|(q^{2j+1}+1)$, it can be deduced that $2^m|aq^{2j+1}$. Since $q$ is odd, $2^m|a$, a contradiction. 
    Hence, the sets $Cl_{q^2,2^mn'}(i) , Cl_{q^2,2^mn'}(i+n') ,Cl_{q^2,2^mn'}(i+2n') ,\dots ,Cl_{q^2,2^mn'}(i+(2^m-1)n')$ are distinct.

    \noindent{\bf Step IV}: Show that for each $0\leq h <2^mn'$,  if $Cl_{q^2,2^mn'}(h) =Cl_{q^2,2^mn'}(-q h)$, then  $Cl_{q^2,2^mn'}(h) =Cl_{q^2,2^mn'}( i+kn' )$ for some integers $0\leq k\leq 2^m-1$ and  $0\leq i<n^\prime $ such that $Cl_{q^2,n'}(i) =Cl_{q^2,n'}(-q i)$.  Let  $0\leq h <2^mn'$ be such that  $Cl_{q^2,2^mn'}(h) =Cl_{q^2,2^mn'}(-q h)$. Then  $Cl_{q^2,2^mn'}(h) =  Cl_{q^2,2^mn'}(i+kn') =Cl_{q^2,2^mn'}(-q(i+kn'))  $ for some   for some integers $0\leq k\leq 2^m-1$ and   $ 0\leq i<n'$.  It follow that 
    $ i+kn'\equiv -q(i+kn')q^{2j}\textrm{mod  }2^mn'$ for some integer $j\geq 0$.   Hence, 
    \begin{align*}
    (i+iq^{2j+1})+kn'(1+q^{2j+1})&\equiv 0 \textrm{ mod  }2^mn'
    \end{align*}
    which implies that 
    \[(i+iq^{2j+1})+kn'(1+q^{2j+1})\equiv 0 \textrm{ mod  }n'.\]
    We    have    $i+iq^{2j+1} \equiv 0 \textrm{ mod  }n'$ and hence  $Cl_{q^2}(i)_{\textrm{mod }n'} =Cl_{q^2}(-qi)_{\textrm{mod }n'} $.

    Therefore, the  monic irreducible polynomials defined  corresponding to the cyclotomic cosets $Cl_{q^2,2^mn'}(i) , Cl_{q^2,2^mn'}(i+n') ,Cl_{q^2,2^mn'}(i+2n') ,\dots ,Cl_{q^2,2^mn'}(i+(2^m-1)n')$  are  the distinct  SCRIM factors  of $x^n-1$.  Hence,  $|\Omega_{q^2,n}| =
    2^m|\Omega_{q^2,n'} |$ as desired. 
    
    \noindent{\bf  Case 2:}  $m>r$. 
    Let  $k=m-r>0$. We prove that $|\Omega_{q^2,2^{m}n'}| =|\Omega_{q^2,2^{r+k}n'}| =
    2^r|\Omega_{q^2,n'} |$.   Since  $(x^{2^rn'}-1) | ({x^{2^{r+k}n'}-1})$,  it is suffices to show that every  SCRIM factor of ${x^{2^{r+k}n'}-1}$   is a divisor of $x^{2^rn'}-1$.  Suppose that there exists a SCRIM factor   $f(x)$   of ${x^{2^{r+k}n'}-1}$ such that  $f(x) \nmid { (x^{2^rn'}-1)}$. Then $f(x)\ne x-1$.   Let $\alpha$ be  a primitive  $2^{r+k}n'$th root of unity. Then $f(x)$  
    is the minimal polynomial of $\alpha^i$ for some $0< i<2^{r+k}n'$.       Suppose that  $i$ is even. Then $i=2i^\prime$ for some $i^\prime\in\mathbb{N}$ and
         $(\alpha^i)^{2^{r+k-1} n^\prime} =(\alpha^{2i^\prime})^{2^{r+k-1} n^\prime} =(\alpha^{i^\prime})^{2^{r+k} n^\prime}  =(\alpha^{2^{r+k} n^\prime} )^{i^\prime} =1$.  Applying the above expression recursively, we have 
           $(\alpha^i)^{2^r n^\prime} = (\alpha^i)^{2^{r+1}n^\prime} = \dots=(\alpha^i)^{2^{r+k} n^\prime}  =1$. 
         It follows that ${\rm ord}(\alpha^i)|2^rn^\prime$ which implies that  $f(x)|(x^{2^r n^\prime}-1)$, a contradiction. Hence, $i$ must be odd. 
    Since $f(x)$ is SCRIM,  $|Cl_{q^2,2^{r+k}n' }(i) |$ is  odd by \cite[Theorem 3.3]{BJU2015}.  It follows that $d:=\displaystyle \sum_{j\in Cl_{q^2,2^{r+1}n'}(i)} j $ is odd. 
    Since  \[  f(0)=\prod_{{j\in Cl_{q^2,2^{r+1}n'}(i)} }\alpha^j = \alpha^d,\] 
    it can be concluded that
    \[1=f(0)^{{\rm ord} _{q^2} (f(0))}=\left( \alpha^d\right)^{{\rm ord} _{q^2} (f(0))} .\]
    Since 
    $\alpha$ be  a primitive  $2^{r+k}n'$th root of unity,  we have $2^{r+k}n'| d {\rm ord} _{q^2} (f(0))$.  Since $d $ is odd, it is implies that 
    $\displaystyle 2^{r+k}| \textrm{ord}_{q^2}(f(0)) $.   
    
    Since $f(x)=f^{\dagger}(x)$, it follows that $ f(0) ^{-q} = f(0)  $. Equivalently,   $f(0)^{q+1} = 1$ which implies that $
    \textrm{ord}_{q^2}(f(0))|(q+1).$  Hence,  
    $2^{r+k}|(q+1)$, a contradiction. 
    Hence, there are no SCRIM polynomials $f(x)$ such that $f(x)\mid (x^{2^{r+k}n'}-1)$ but $f(x)\nmid( x^{2^rn'}-1)$. Therefore, $|\Omega_{2^mn'}|=|\Omega_{2^rn'}|.$

    In the case where  $n^\prime=1$, it is obvious that  $|\Omega_{q^2,n'} | =1$. Hence, the result follows.
\end{proof}

Next, we focus on the case where $n=n_1n_2$ is odd  and   $x-1$ is only the SCRIM factor  $x^{n_1}-1$ and $x^{n_2}-1$ over $\mathbb{F}_{q^2}$.

\begin{thm}\label{one-one-scrim-n}%17
    Let $n_1$ and $n_2$ be coprime odd integers relatively prime to $q$. If  $x-1$ is only the SCRIM factor  $x^{n_1}-1$ and $x^{n_2}-1$ over $\mathbb{F}_{q^2}$, then it is the only SCRIM factor of $x^{n_1n_2}-1$ over $\mathbb{F}_{q^2}$.  
\end{thm}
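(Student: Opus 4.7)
The plan is to reduce the statement immediately to Corollary~\ref{cor-1-1}, which already characterizes the property ``$x-1$ is the only SCRIM factor of $x^n-1$'' in terms of the prime divisors of $n$.

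First I would rewrite the hypotheses via Corollary~\ref{cor-1-1}. Applying it to $n_1$, the assumption that $x-1$ is the only SCRIM factor of $x^{n_1}-1$ over $\mathbb{F}_{q^2}$ means that for every prime divisor $l$ of $n_1$, either $\mathrm{ord}_l(q^2)$ is even or $\mathrm{ord}_l(q)$ is odd. Applying it to $n_2$ gives the analogous statement for every prime divisor of $n_2$.

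Next I would combine these using the coprimality assumption. Since $\gcd(n_1,n_2)=1$, the set of prime divisors of $n_1n_2$ is exactly the disjoint union of the sets of prime divisors of $n_1$ and of $n_2$. Consequently, for every prime divisor $l$ of $n_1n_2$, one of $\mathrm{ord}_l(q^2)$ is even or $\mathrm{ord}_l(q)$ is odd holds. Also, $n_1n_2$ is odd (as a product of two odd integers) and coprime to $q$ (since each factor is), so the hypotheses of Corollary~\ref{cor-1-1} are satisfied for $n_1n_2$.

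Finally I would apply Corollary~\ref{cor-1-1} in the reverse direction to conclude that $x-1$ is the only SCRIM factor of $x^{n_1n_2}-1$ over $\mathbb{F}_{q^2}$. There is essentially no obstacle here, since Corollary~\ref{cor-1-1} has already done the substantive work via Theorem~\ref{order} and Proposition~\ref{cor2.9}; the only thing to verify carefully is that the ``local'' condition on prime divisors is preserved under taking the product of coprime integers, which is immediate.
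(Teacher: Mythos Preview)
Your proposal is correct, but the paper argues differently. The paper's proof is a direct cyclotomic-coset computation: take any $h$ with $Cl_{q^2,n_1n_2}(h)=Cl_{q^2,n_1n_2}(-qh)$, so $hq^{2i}\equiv -hq\pmod{n_1n_2}$ for some $i$; reduce this congruence modulo $n_1$ and modulo $n_2$ to see that the corresponding cosets modulo $n_1$ and $n_2$ are self-conjugate-reciprocal, forcing $n_1\mid h$ and $n_2\mid h$ by the hypothesis, hence $h=0$ by coprimality. Your route instead packages everything through Corollary~\ref{cor-1-1}, translating the hypothesis into a condition on the prime divisors of $n_1$ and $n_2$ and then observing that the prime divisors of $n_1n_2$ are just the union.

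Your approach is slightly more conceptual and, as a bonus, shows that the assumption $\gcd(n_1,n_2)=1$ is not actually needed: the prime divisors of $n_1n_2$ form the union of those of $n_1$ and $n_2$ regardless of coprimality, so Corollary~\ref{cor-1-1} still applies. The paper's argument, by contrast, genuinely uses coprimality in the last step (to deduce $h=0$ from $n_1\mid h$ and $n_2\mid h$), but is more self-contained in that it appeals only to Lemma~\ref{thm -qi} rather than to the full chain Theorem~\ref{order}--Proposition~\ref{cor2.9}--Corollary~\ref{cor-1-1}.
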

\begin{proof} 
    Let $0\leq h< n_1n_2$ such that $Cl_{q^2,n_1n_2}(h) =Cl_{q^2,n_1n_2}(-qh) $.  Then there exist $i \in \mathbb{N}$ such that $hq^{2i}\equiv -hq \mathrm{\,mod\,}  n_1n_2$. Since $\gcd(q,n_1)=1$ and $\gcd(q,n_2)=1$,  
    \begin{align*}
    hq^{2i}\equiv -hq \mathrm{\,mod\,} n_1 \textrm{~~~~and~~~~} hq^{2i}\equiv -hq \mathrm{\,mod\,} n_2.
    \end{align*}
    Since $x-1$ is only the SCRIM factor  $x^{n_1}-1$ and $x^{n_2}-1$ over $\mathbb{F}_{q^2}$,  we have $h\equiv 0 \mathrm{\,mod\,} n_1$ and $h\equiv 0 \mathrm{\,mod\,} n_2$. Since $n_1$ and $n_2$ are coprime,  we have $h=0$. Hence, $x^{ n_1n_2}-1$ contains only one  SCRIM over $\mathbb{F}_{q^2}$.
\end{proof}
The next corollary follows immediately. 
\begin{cor} %2.12
    Let $n_1$ and $n_2$ be coprime odd integers relatively prime to $q$. If  $x-1$ is only the SCRIM factor  $x^{n_1}-1$ and $x^{n_2}-1$ over $\mathbb{F}_{q^2}$, then  the number of SCRIM factors of $x^{n_1n_2}-1$ over $\mathbb{F}_{q^2}$ is $|\Omega_{q^2,n_1n_2}|=|\Omega_{q^2,n_1}||\Omega_{q^2,n_2}|=1$.  
\end{cor}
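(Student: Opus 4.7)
The plan is to observe that this corollary is an immediate consequence of Theorem \ref{one-one-scrim-n} combined with the translation between the verbal hypothesis ``$x-1$ is the only SCRIM factor'' and the numerical statement $|\Omega_{q^2,n}|=1$.

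First, I would unpack the hypothesis: saying that $x-1$ is the only SCRIM factor of $x^{n_1}-1$ over $\mathbb{F}_{q^2}$ is exactly the statement that $|\Omega_{q^2,n_1}|=1$, since $x-1$ is always a SCRIM factor of $x^n-1$ whenever $n$ is a positive integer coprime to $q$ (it corresponds to the trivial cyclotomic coset $Cl_{q^2,n}(0)=\{0\}$, which trivially satisfies $Cl_{q^2,n}(0)=Cl_{q^2,n}(-q\cdot 0)$ by Lemma \ref{thm -qi}). Identically, $|\Omega_{q^2,n_2}|=1$.

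Next, I would invoke Theorem \ref{one-one-scrim-n} directly: under the stated hypothesis and the coprimality of $n_1$ and $n_2$, the theorem tells us that $x-1$ is the only SCRIM factor of $x^{n_1 n_2}-1$ over $\mathbb{F}_{q^2}$. Translating back to cardinalities, this means $|\Omega_{q^2,n_1 n_2}|=1$, and therefore
\[
|\Omega_{q^2,n_1 n_2}|=1=1\cdot 1=|\Omega_{q^2,n_1}|\,|\Omega_{q^2,n_2}|,
\]
which is the desired equality. There is no real obstacle here; the corollary is essentially a restatement of the preceding theorem in the language of counts, and the only small point to verify is that $x-1$ is indeed always counted among the SCRIM factors, which follows from the definition and Lemma \ref{thm -qi}.
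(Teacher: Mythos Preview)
Your proposal is correct and matches the paper's approach exactly: the paper simply states ``The next corollary follows immediately'' after Theorem~\ref{one-one-scrim-n}, and your write-up just makes explicit the trivial translation between ``$x-1$ is the only SCRIM factor'' and $|\Omega_{q^2,n}|=1$.
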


In the following theorem, we focus on the case where $n=n_1n_2$ is odd  and  all irreducible factors of $x^{n_1}-1$ over $\mathbb{F}_{q^2} $ are SCRIM and  $x-1$ is the only SCRIM factor of  $x^{n_2}-1$ over $\mathbb{F}_{q^2}$.

\begin{thm}\label{all-one-scrim-n}%17
    Let $n_1$ and $n_2$ be coprime odd integers relatively prime to $q$. If all irreducible factors of $x^{n_1}-1$ over $\mathbb{F}_{q^2} $ are SCRIM and  $x-1$ is the only SCRIM factor of  $x^{n_2}-1$ over $\mathbb{F}_{q^2}$, then  the number of SCRIM factors   of $x^{n_1n_2}-1$ over $\mathbb{F}_{q^2}$  is $|\Omega_{q^2,n_1}|$. 
\end{thm}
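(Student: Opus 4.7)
The plan is to reduce the enumeration for $x^{n_1 n_2}-1$ to that for $x^{n_1}-1$ by decomposing the residues modulo $n_1 n_2$ via the Chinese Remainder Theorem. By Lemma \ref{thm -qi}, the SCRIM factors of $x^{n_1 n_2}-1$ over $\mathbb{F}_{q^2}$ are in bijection with the $q^2$-cyclotomic cosets $Cl_{q^2, n_1 n_2}(i)$ satisfying $Cl_{q^2, n_1 n_2}(i) = Cl_{q^2, n_1 n_2}(-qi)$, and likewise for $x^{n_1}-1$ and $x^{n_2}-1$. I aim to show that such $i$ are exactly the multiples $cn_2$ with $0 \le c < n_1$, and that the map $c \mapsto Cl_{q^2, n_1 n_2}(cn_2)$ induces a bijection between $q^2$-cyclotomic cosets modulo $n_1$ and SCRIM-producing cosets modulo $n_1 n_2$.

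First, I would show that any such $i$ must be divisible by $n_2$. Unpacking the coset condition gives $-qi \equiv i q^{2j} \pmod{n_1 n_2}$ for some $j \ge 0$, which when reduced modulo $n_2$ yields $Cl_{q^2, n_2}(i \bmod n_2) = Cl_{q^2, n_2}(-q (i \bmod n_2))$. By Lemma \ref{thm -qi}, $i \bmod n_2$ then corresponds to a SCRIM factor of $x^{n_2}-1$; the hypothesis forces this factor to be $x-1$, so $i \equiv 0 \pmod{n_2}$. This step parallels the argument of Theorem \ref{one-one-scrim-n}.

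Conversely, writing $i = cn_2$ with $0 \le c < n_1$, the congruence $-qcn_2 \equiv cn_2\, q^{2j} \pmod{n_1 n_2}$ is equivalent, after cancelling $n_2$ (which is legal because $\gcd(n_1, n_2) = 1$), to $-qc \equiv c q^{2j} \pmod{n_1}$, i.e.\ to $Cl_{q^2, n_1}(c) = Cl_{q^2, n_1}(-qc)$. By the hypothesis that every irreducible factor of $x^{n_1}-1$ is SCRIM together with Lemma \ref{thm -qi}, this holds for every $c$. An entirely parallel cancellation shows that $Cl_{q^2, n_1 n_2}(c_1 n_2) = Cl_{q^2, n_1 n_2}(c_2 n_2)$ if and only if $Cl_{q^2, n_1}(c_1) = Cl_{q^2, n_1}(c_2)$, so the induced map gives the claimed bijection and $|\Omega_{q^2, n_1 n_2}|$ equals the total number of $q^2$-cyclotomic cosets modulo $n_1$, which by hypothesis equals $|\Omega_{q^2, n_1}|$. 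The main subtlety lies in correctly tracking the cancellation of $n_2$ in the congruences and confirming that the same exponent $j$ can be transported between the moduli, both of which are guaranteed by $\gcd(n_1, n_2) = 1$.
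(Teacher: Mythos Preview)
Your proposal is correct and follows essentially the same approach as the paper's proof: both arguments use Lemma~\ref{thm -qi} to translate the SCRIM condition into a statement about $q^2$-cyclotomic cosets, reduce the congruence $-qi \equiv iq^{2j}\pmod{n_1n_2}$ modulo $n_1$ and $n_2$ separately, and use the hypotheses on $x^{n_1}-1$ and $x^{n_2}-1$ to conclude that the SCRIM-producing residues modulo $n_1n_2$ are precisely the multiples of $n_2$, in bijection with the cosets modulo $n_1$. The paper organizes the same three verifications (multiples of $n_2$ give SCRIM cosets, distinct cosets stay distinct, non-multiples do not give SCRIM cosets) in a slightly different order, but the substance is identical.
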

\begin{proof}
    Assume that all irreducible factors of $x^{n_1}-1$ over $\mathbb{F}_{q^2}$ are SCRIM and $x-1$ is the only SCRIM factor of  $x^{n_2}-1$ over $\mathbb{F}_{q^2}$.  By Lemma \ref{thm -qi}, $Cl_{q^2,n_1}(h) =Cl_{q^2,n_1}(-qh) $ for all  $0\leq h<n_1$ and  $Cl_{q^2,n_2}(k) \ne Cl_{q^2,n_2}(-qk) $ for all  $0< k<n_2$.  It suffices to   prove the following steps.

    \noindent{\bf Step I}: Show that   $Cl_{q^2,n_1n_2}(h n_2) =Cl_{q^2,n_1n_2}(-qhn_2)$  for all  $0\leq h <n_1$.  Let  $0\leq h <n_1$.  Since $Cl_{q^2,n_1}(h) =Cl_{q^2,n_1}(-qh) $,  we have    
    $hq^{2i}\equiv -hq \mathrm{\,mod\,} n_1$ for some integer $i\geq 0$.
    Hence, $
    hn_2q^{2i}\equiv -hn_2q \mathrm{\,mod\,} n_1n_2$  which implies that  $Cl_{q^2,n_1n_2}(h n_2) =Cl_{q^2,n_1n_2}(-qhn_2)$  for all  $0\leq h <n_1$.

    \noindent{\bf Step II}: Show that  $Cl_{q^2,n_1n_2}(an_2)  \ne Cl_{q^2,n_1n_2}(bn_2)$  for all  $0\leq a <b<n_1$ such that  $Cl_{q^2,n_1}(a)  \ne Cl_{q^2,n_1}(b)$.  Suppose that  $Cl_{q^2,n_1n_2}(an_2)  =Cl_{q^2,n_1n_2}(bn_2)$  and  $Cl_{q^2,n_1}(a)  \ne Cl_{q^2,n_1}(b)$ for some  $0\leq a <b<n_1$.  Then $
    an_2q^{2i}\equiv bn_2 \mathrm{\,mod\,} n_1n_2$ for some integer $i\geq 0$.  It follows that  $
    aq^{2i}\equiv b \mathrm{\,mod\,} n_1$ which implies that  $Cl_{q^2,n_1}(a)  \ne Cl_{q^2,n_1}(b)$, a contradiction. Hence, the statement is proved.

    \noindent{\bf Step III}: Prove that   $Cl_{q^2,n_1n_2}(c) \ne Cl_{q^2,n_1n_2}(-qc)$ for all $c\notin \{an_2\mid 0\leq a <n_1\}$.  Let  $c\notin \{an_2\mid 0\leq a <n_1\}$. Then $c=an_2+b$ for some  $0\leq a<n_1$ and $0<b<n_2$. Suppose that $Cl_{q^2,n_1n_2}(c) = Cl_{q^2,n_1n_2}(-qc)$.  Then  $
    q^{2i}(an_2+b) \equiv -q(an_2+b) \mathrm{\,mod\,} n_1n_2$ for some integer $i\geq 0$.  It follows that     $
    q^{2i}(an_2+b) \equiv -q(an_2+b) \mathrm{\,mod\,} n_2$ and hence,     $
    q^{2i}b\equiv -qb\mathrm{\,mod\,} n_2$. It can be deduced that $Cl_{q^2,n_2}(b) = Cl_{q^2,n_2}(-qb)$, a contradiction. Hence, $Cl_{q^2,n_1n_2}(c) \ne Cl_{q^2,n_1n_2}(-qc)$ for all $c\notin \{an_2\mid 0\leq a <n_1\}$.

    Therefore, we have  $|\Omega_{q^2,n_1n_2}|=|\Omega_{q^2,n_1}|$  as desired. 
    %   \color{red}
    %  
    %
    % Hence, if $Cl_{q^2}(h)_{\mathrm{\,mod\,} m^{k_1}}=Cl_{q^2}(-qh)_{\mathrm{\,mod\,} m^{k_1}},$ then $Cl_{q^2}(hn^{k_2})_{\mathrm{\,mod \,} m^{k_1}n^{k_2}}=Cl_{q^2}(-qhn^{k_2})_{\mathrm{\,mod\,} m^{k_1}n^{k_2}}.$
    % 
    % Let $0<  i<m^{k_1}n^{k_2}$ such that $iq^{2j}\ncong -iq \mathrm{\,mod\,} m^{k_1}$ for all $j\in \mathbb{N}$. We will show that $Cl_{q^2}(i)_{\mathrm{\,mod\,} m^{k_1}n^{k_2}}\ne Cl_{q^2}(-qi)_{\mathrm{\,mod\,} m^{k_1}n^{k_2}},$ 
    % 
    % Suppose there exist $j\in\mathbb{N}$ such that
    % \begin{align*}
    % iq^{2j}&\equiv -iq \mathrm{\,mod\,} m^{k_1}n^{k_2}\\
    % iq^{2j}\equiv -iq \mathrm{\,mod\,} &m^{k_1} \textrm{\,\,\,and\,\,\,}iq^{2j}\equiv -iq \mathrm{\,mod\,} n^{k_2}\\
    % i\equiv 0 \mathrm{\,mod\,} &m^{k_1} \textrm{\,\,\,and\,\,\,}i\equiv 0 \mathrm{\,mod \,} n^{k_2}\\
    % \therefore m^{k_1}|i \textrm{\,and\,}& n^{k_2}|i \textrm{\,then\,} m^{k_1}n^{k_2}|i.
    % \end{align*}
    % This is a contradiction.
    % 
    %
    % 
\end{proof}

In the next theorem, we consider the case where $n=n_1n_2$ is odd  and  all  irreducible monic factors of $x^{n_1}-1$ and $x^{n_2}-1$ over $\mathbb{F}_{q^2}$ are SCRIM.

\begin{thm}\label{all-all-scrim-n}%17
    Let $n_1$ and $n_2$ be  odd integers coprime  to $q$. If all  irreducible monic factors of $x^{n_1}-1$ and $x^{n_2}-1$ over $\mathbb{F}_{q^2}$ are SCRIM, then all irreducible monic  factor of $x^{n_1n_2}-1$ over $\mathbb{F}_{q^2}$ are SCRIM. 
\end{thm}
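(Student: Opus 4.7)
The plan is to reduce the statement to a direct application of Theorem \ref{char-n} by transferring the characterization via prime divisors. Since $n_1$ and $n_2$ are odd and coprime to $q$, the product $n_1 n_2$ is also odd and coprime to $q$, so Theorem \ref{char-n} applies to $n_1 n_2$ as well.

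First I would invoke the implication $1)\Rightarrow 3)$ of Theorem \ref{char-n} applied separately to $n_1$ and to $n_2$. This yields that for every prime divisor $l$ of $n_1$ (respectively of $n_2$), $\textrm{ord}_l(q^2)$ is odd while $\textrm{ord}_l(q)$ is even. Next I would observe the elementary fact that the set of prime divisors of $n_1 n_2$ is precisely the union of the prime divisors of $n_1$ and of $n_2$: if $l$ is prime and $l \mid n_1 n_2$, then by primality $l \mid n_1$ or $l \mid n_2$. Consequently, the arithmetic condition ``$\textrm{ord}_l(q^2)$ odd and $\textrm{ord}_l(q)$ even'' holds for every prime divisor $l$ of $n_1 n_2$.

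At this stage I would apply the implication $3)\Rightarrow 1)$ of Theorem \ref{char-n} to $n_1 n_2$ to conclude that every monic irreducible factor of $x^{n_1 n_2}-1$ over $\mathbb{F}_{q^2}$ is SCRIM, which is the claim.

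There is essentially no obstacle here; the result is a formal corollary of Theorem \ref{char-n} combined with the multiplicative behavior of prime divisors. The only point worth remarking on (and worth stating explicitly in the write-up, to avoid ambiguity) is that the hypothesis does not require $\gcd(n_1,n_2)=1$: whether or not $n_1$ and $n_2$ share prime factors, the union of their prime divisors still coincides with the set of prime divisors of $n_1 n_2$, so the argument goes through unchanged.
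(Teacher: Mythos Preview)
Your proposal is correct and follows essentially the same approach as the paper: apply Theorem~\ref{char-n} to $n_1$ and $n_2$ to obtain the order conditions on each prime divisor, observe that the prime divisors of $n_1n_2$ are exactly those of $n_1$ and $n_2$, and then apply Theorem~\ref{char-n} in the reverse direction to $n_1n_2$. Your additional remark that $\gcd(n_1,n_2)=1$ is unnecessary is a valid observation not made explicit in the paper.
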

\begin{proof} Assume that the  irreducible monic factors of $x^{n_1}-1$ and $x^{n_2}-1$ over $\mathbb{F}_{q^2}$ are SCRIM.  
    From Theorem \ref{char-n},  $\textrm{ord}_l{(q^2)}$ is odd and   $\textrm{ord}_l{(q)}$ is even for all  prime divisors $l$  of $n_1$ and $n_2$. Hence, $\textrm{ord}_l{(q^2)}$ is odd and   $\textrm{ord}_l{(q)}$ is even for all  prime divisors $l$  of $n_1n_2$.   By Theorem \ref{char-n}, the irreducible monic  factor of $x^{n_1n_2}-1$ over $\mathbb{F}_{q^2}$ are SCRIM. 
\end{proof}

Theorem \ref{all-all-scrim-n}  does not imply that  $|\Omega_{q^2,n_1n_2}|=|\Omega_{q^2,n_1}||\Omega_{q^2,n_2}|$.  However, it is true   in some cases.  It is not difficult to see that  $|\Omega_{5^2,7\cdot 23}|=9=3\cdot 3=|\Omega_{5^2,7}||\Omega_{5^2,23}|$  and  $|\Omega_{3^2,7\cdot 19}|=17\ne 3\cdot 3=|\Omega_{3^2,7}||\Omega_{3^2,19}|$. 
Necessary conditions for $n_1$ and $n_2$  to have $|\Omega_{q^2,n_1n_2}|=|\Omega_{q^2,n_1}||\Omega_{q^2,n_2}|$ are  given in the next corollary. 
\begin{cor} %2.12
    Let $l_1$ and $l_2$ be  distinct odd primes relatively prime to $q$. If all irreducible factors of $x^{l_1}-1$ and $x^{l_2}-1$ over $\mathbb{F}_{q^2}$ are SCRIM   such  that $\gcd({\rm ord}_{l_1}(q^2),{\rm ord}_{l_2}(q^2))=1$, then the number of SCRIM factors of $x^{l_1l_2}-1$ over $\mathbb{F}_{q^2}$ is $|\Omega_{q^2,l_1l_2}|=|\Omega_{q^2,n_1}||\Omega_{q^2,n_2}|$. 
\end{cor}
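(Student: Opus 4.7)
The plan is to reduce the problem to counting all monic irreducible factors of $x^{l_1 l_2}-1$ (not only the SCRIM ones) and then factor the resulting sum as a product.

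First, I would invoke Theorem \ref{all-all-scrim-n}: since every monic irreducible factor of $x^{l_1}-1$ and $x^{l_2}-1$ over $\mathbb{F}_{q^2}$ is SCRIM, every monic irreducible factor of $x^{l_1 l_2}-1$ over $\mathbb{F}_{q^2}$ is also SCRIM. Consequently, for each $i \in \{1,2\}$, $|\Omega_{q^2, l_i}|$ coincides with the total number of monic irreducible factors of $x^{l_i}-1$, and likewise $|\Omega_{q^2, l_1 l_2}|$ equals the total number of monic irreducible factors of $x^{l_1 l_2}-1$.

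Next, I would apply the standard count $\sum_{d\mid n}\phi(d)/\mathrm{ord}_d(q^2)$ for the number of monic irreducible factors of $x^n-1$. Since the divisors of $l_1 l_2$ are $1,l_1,l_2,l_1 l_2$, this gives
\[
|\Omega_{q^2,l_1 l_2}| = 1 + \frac{l_1-1}{\mathrm{ord}_{l_1}(q^2)} + \frac{l_2-1}{\mathrm{ord}_{l_2}(q^2)} + \frac{(l_1-1)(l_2-1)}{\mathrm{ord}_{l_1 l_2}(q^2)}.
\]
The key intermediate step is to show that the coprimality hypothesis forces $\mathrm{ord}_{l_1 l_2}(q^2) = \mathrm{ord}_{l_1}(q^2)\cdot\mathrm{ord}_{l_2}(q^2)$. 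This is the standard CRT observation: modulo $l_1 l_2$, a power $(q^2)^s \equiv 1$ if and only if it is $\equiv 1$ modulo each $l_i$, giving $\mathrm{ord}_{l_1 l_2}(q^2)=\mathrm{lcm}(\mathrm{ord}_{l_1}(q^2),\mathrm{ord}_{l_2}(q^2))$, and the assumption $\gcd(\mathrm{ord}_{l_1}(q^2),\mathrm{ord}_{l_2}(q^2))=1$ collapses the lcm to the product.

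Substituting this back, the four-term sum factors as
\[
\left(1+\frac{l_1-1}{\mathrm{ord}_{l_1}(q^2)}\right)\left(1+\frac{l_2-1}{\mathrm{ord}_{l_2}(q^2)}\right) = |\Omega_{q^2,l_1}|\cdot|\Omega_{q^2,l_2}|,
\]
which is the desired identity. I do not foresee a serious obstacle; the only delicate point is remembering at the outset that, under our SCRIM-everywhere hypothesis, counting SCRIM factors is the same as counting all cyclotomic cosets. Without this simplification one would be tempted to manipulate the $\lambda(q,d)$-weighted formula \eqref{formulaO} directly, which is messier because it would require a separate verification that $\lambda(q,l_1 l_2)=1$ before the algebraic factorisation could be carried out.
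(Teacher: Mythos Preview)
Your proposal is correct and follows essentially the same route as the paper's own proof: invoke Theorem~\ref{all-all-scrim-n} so that $|\Omega_{q^2,l_1l_2}|$ becomes the full divisor sum $\sum_{d\mid l_1l_2}\phi(d)/\mathrm{ord}_d(q^2)$, use the coprimality hypothesis to replace $\mathrm{ord}_{l_1l_2}(q^2)$ by the product $\mathrm{ord}_{l_1}(q^2)\,\mathrm{ord}_{l_2}(q^2)$, and then factor the resulting four-term sum. The only cosmetic difference is that you write $l_i-1$ where the paper writes $\phi(l_i)$ and you spell out the CRT/lcm justification for the order identity a bit more explicitly than the paper does.
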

\begin{proof} 
    From Theorem \ref{all-all-scrim-n}, all irreducible factors of $x^{l_1l_2}-1$ over $\mathbb{F}_{q^2}$ are SCRIM.  Since $\gcd({\rm ord}_{l_1}(q^2),{\rm ord}_{l_2}(q^2))=1$,  we have ${\rm ord}_{l_1l_2}(q^2)= {{\rm ord}_{l_1}(q^2){\rm ord}_{l_2}(q^2)}$.  Hence, 
    \begin{align*}
    |\Omega_{q^2,l_1l_2}|&=\displaystyle\sum_{d|l_1l_2}\frac{\phi(d)}{{\rm ord}_d(q^2)}\\
    &=1+\frac{\phi(l_1)}{{\rm ord}_{l_1}(q^2)}+\frac{\phi(l_2)}{{\rm ord}_{l_2}(q^2)}+\frac{\phi(l_1l_2)}{{\rm ord}_{l_1l_2}(q^2)}\\
    &=1+\frac{\phi(l_1)}{{\rm ord}_{l_1}(q^2)}+\frac{\phi(l_2)}{{\rm ord}_{l_2}(q^2)}+\frac{\phi(l_1)\phi(l_2)}{{\rm ord}_{l_1}(q^2){\rm ord}_{l_2}(q^2)}\\
    &=[1+\frac{\phi(l_1)}{{\rm ord}_{l_1}(q^2)}]+\frac{\phi(l_2)}{{\rm ord}_{l_2}(q^2)}[1+\frac{\phi(l_1)}{{\rm ord}_{l_1}(q^2)}]\\
    &=[1+\frac{\phi(l_1)}{{\rm ord}_{l_1}(q^2)}][1+\frac{\phi(l_2)}{{\rm ord}_{l_2}(q^2)}]\\
    &=|\Omega_{q^2,l_1}||\Omega_{q^2,l_2}|
    \end{align*} 
    as desired. 
\end{proof}

The results can be summarized in the following theorem. 

\begin{thm} \label{thmSUM}
    Let $q$ be a prime power and let $l_1,l_2, \dots ,l_t$ be distinct  odd primes relatively prime to $q$.   For each $i1\leq i\leq t$, let $r_i$ be a positive integer and let  $\textrm{ord}_{l_i}(q)=2^{a_i}d_i$, where $a_i\geq 0 $ is an integer and $d_i\geq 1$ is an odd   integer.  Then the following statements hold.
    \begin{enumerate}[$1)$]
        \item If  there exists $j\in \{1,2,\dots, t\}$ such that $a_j=0 $ or $a_j\geq  2$, then \[|\Omega_{q^2,\prod_{i=1}^{t}l_i^{r_i}}|= |\Omega_{q^2, \prod_{i=1}^{j-1}l_i^{r_i}\prod_{i=j+1}^{t}l_i^{r_i}}|.\] 
        \item If $a_1=a_2=\dots =a_t=1$, then \[ |\Omega_{q^2, \prod_{i=1}^{t}l_i^{r_i}}|=\displaystyle \sum_{d|\prod_{i=1}^{t}l_i^{r_i}}\dfrac{\phi(d)}{\textrm{ord}_d(q^2)}.\] 
        \item If  $a_i  \ne 1$ for all  $i\in \{1,2,\dots, t\}$, then \[|\Omega_{q^2, \prod_{i=1}^{t}l_i^{r_i}}|=1.\]
    \end{enumerate} 
    The empty product will be regarded as $1$.
\end{thm}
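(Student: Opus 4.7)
The plan is to translate the three hypotheses on the $a_i$ into structural statements about SCRIM factors of each $x^{l_i^{r_i}}-1$, and then deduce the three cases from the earlier characterization and multiplicativity results. By Lemma~\ref{ord}, $\mathrm{ord}_{l_i^{r_i}}(q)=\mathrm{ord}_{l_i}(q)\cdot l_i^{e_i}$ for some $e_i\geq 0$, and since each $l_i$ is odd this order has the same $2$-adic valuation $2^{a_i}$ as $\mathrm{ord}_{l_i}(q)$. Using $\mathrm{ord}_{l}(q^2)\in\{\mathrm{ord}_{l}(q),\mathrm{ord}_{l}(q)/2\}$, one checks that $a_i=0$ forces $\mathrm{ord}_{l_i}(q)$ odd, $a_i=1$ forces $\mathrm{ord}_{l_i}(q^2)$ odd with $\mathrm{ord}_{l_i}(q)$ even, and $a_i\geq 2$ forces $\mathrm{ord}_{l_i}(q^2)$ even. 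Hence, by Corollary~\ref{cor-1-1} and Theorem~\ref{char-n} applied to $n=l_i^{r_i}$, the condition $a_i\neq 1$ is equivalent to ``$x-1$ is the only SCRIM factor of $x^{l_i^{r_i}}-1$,'' while $a_i=1$ is equivalent to ``every monic irreducible factor of $x^{l_i^{r_i}}-1$ is SCRIM.''

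Case 2 and Case 3 then follow almost immediately. In Case 2, every prime divisor $l_i$ of $n$ satisfies $\mathrm{ord}_{l_i}(q^2)$ odd and $\mathrm{ord}_{l_i}(q)$ even, so the implication $(3)\Rightarrow(1)$ of Theorem~\ref{char-n} gives that every monic irreducible factor of $x^n-1$ is SCRIM; thus $|\Omega_{q^2,n}|$ equals the total count of monic irreducible factors of $x^n-1$ over $\mathbb{F}_{q^2}$, namely $\sum_{d\mid n}\phi(d)/\mathrm{ord}_d(q^2)$. In Case 3, every prime divisor $l_i$ satisfies the hypothesis of Corollary~\ref{cor-1-1}, so $x-1$ is the only SCRIM factor of $x^n-1$, giving $|\Omega_{q^2,n}|=1$.

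The substantive case is Case 1. Let $j$ be an index with $a_j\neq 1$, set $m=l_j^{r_j}$ and $M=\prod_{i\neq j}l_i^{r_i}$, so $n=mM$ with $\gcd(m,M)=1$ and $x-1$ the only SCRIM factor of $x^m-1$. The plan is to establish $|\Omega_{q^2,mM}|=|\Omega_{q^2,M}|$ by a bijection between SCRIM cyclotomic cosets modulo $M$ and SCRIM cyclotomic cosets modulo $mM$, adapted from the proof of Theorem~\ref{all-one-scrim-n}. Forward map: send a SCRIM representative $h$ modulo $M$ to $hm$ modulo $mM$; the congruence $hq^{2i}\equiv -qh\pmod{M}$ yields $hm\,q^{2i}\equiv -q(hm)\pmod{mM}$ (Step I of that proof), and distinct $h$ yield distinct cosets (Step II); both steps use only the pointwise mod-$M$ SCRIM hypothesis and not the ``all SCRIM'' hypothesis Theorem~\ref{all-one-scrim-n} states. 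Surjectivity: given any SCRIM coset representative $c$ modulo $mM$, reduce $cq^{2i}\equiv -qc\pmod{mM}$ modulo $m$ to obtain $Cl_{q^2,m}(c\bmod m)=Cl_{q^2,m}(-q(c\bmod m))$, forcing $c\equiv 0\pmod{m}$ since $x-1$ is the only SCRIM factor of $x^m-1$; writing $c=hm$ and reducing the same congruence modulo $M$, then cancelling the unit $m$, yields $hq^{2i}\equiv -qh\pmod{M}$, so $h$ indexes a SCRIM coset modulo $M$. This completes the bijection.

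The main obstacle is that Theorem~\ref{all-one-scrim-n} as stated assumes \emph{all} irreducible factors of $x^{n_1}-1$ are SCRIM, whereas Case 1 needs only the complementary ``$x-1$ is the only SCRIM of $x^m-1$'' hypothesis on the other factor. The remedy is to observe that Step I of that proof is actually a pointwise implication (SCRIM mod $M$ implies SCRIM mod $mM$ at $hm$) and Steps II--III never use the ``all SCRIM'' assumption, so the bijection carries over. Iterating Case 1 removes all indices with $a_i\neq 1$ and reduces either to Case 2 on the surviving indices or to the empty product $n=1$, where $|\Omega_{q^2,1}|=1$; this serves as a consistency check against the three-way dichotomy as stated.
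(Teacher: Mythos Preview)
Your proof is correct and follows essentially the same route as the paper: translate the $a_i$ conditions into the parity conditions on $\mathrm{ord}_{l_i}(q)$ and $\mathrm{ord}_{l_i}(q^2)$, then invoke Corollary~\ref{cor-1-1}, Theorem~\ref{char-n}, and Theorem~\ref{all-one-scrim-n} for the three cases.

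The one place where you go beyond the paper is worth highlighting. For part~1), the paper simply writes ``the result can be obtained via Corollary~\ref{cor-1-1} and Theorem~\ref{all-one-scrim-n}.'' But Theorem~\ref{all-one-scrim-n} as stated requires \emph{all} monic irreducible factors of $x^{n_1}-1$ to be SCRIM, and in part~1) the complementary factor $M=\prod_{i\ne j}l_i^{r_i}$ need not satisfy this. You correctly flag this and repair it by observing that the proof of Theorem~\ref{all-one-scrim-n} actually establishes a pointwise bijection between SCRIM cosets modulo $M$ and SCRIM cosets modulo $mM$ whenever $x-1$ is the only SCRIM factor of $x^{m}-1$: Steps~I and~II of that proof use only the individual congruence $hq^{2i}\equiv -qh\pmod{M}$, and your surjectivity argument (reducing modulo $m$ to force $m\mid c$, then cancelling $m$ modulo $M$) completes the bijection without the blanket ``all SCRIM'' hypothesis. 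This is a genuine tightening of the paper's argument; the paper's citation is at best elliptical.
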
 
\begin{proof} To prove $1)$,  assume that here exists $j\in \{1,2,\dots, t\}$ such that $a_j=0 $ or $a_j\geq  2$.  Then $\textrm{ord}_{l_i}(q)=d_i$  is odd or  $\textrm{ord}_{l_i}(q^2)$ is even.  Then the result can be obtained  via  Corollary  \ref{cor-1-1} and Theorem  \ref{all-one-scrim-n}.
    
    To prove $2)$, assume that $a_1=a_2=\dots =a_t=1$. Then $\textrm{ord}_{l_i}(q)=2d_i$ and  $\textrm{ord}_{l_i}(q^2)=d_i$ for all $1\leq i\leq t$.  The the result follows from   Theorem \ref{char-n} and \eqref{formulaO}. 
    
    To prove $3)$, assume that  $a_i  \ne 1$ for all  $i\in \{1,2,\dots, t\}$. Then for each $i\in\{1,2,\dots, t\} $ we have  $\textrm{ord}_{l_i}(q)=d_i$ or  $\textrm{ord}_{l_i}(q^2)$ is even.   Then the result follows from      Corollary  \ref{cor-1-1}  and Theorem \ref{one-one-scrim-n}.
\end{proof}

From Theorems \ref{evenInt} and  \ref{thmSUM},  the number  $|\Omega_{q^2,n}|$ can be determined for all prime power $q$ and  all positive integers $n$ with $\gcd(n,q)=1$ using the following steps. 
\begin{enumerate}
    \item  Write $n=2^mn^\prime$, where $n^\prime $ is an odd positive integer and $m\geq 0$ is an integer. 
    
    \item  Consider the following $2$ cases. 
    \begin{enumerate}
        \item $n^\prime =1$. Then $|\Omega_{q^2,n}|=|\Omega_{q^2,2^m}|$ can be determined in  Theorem \ref{evenInt}.
        \item $n^\prime \geq 3$. Compute a prime factorization of    $n^\prime$. Then $n^\prime=\prod_{i=1}^{t}l_i^{r_i}$, where    $t$, $r_1,r_2,\dots, r_t$  are   positive integers and  $l_1,l_2,\dots, l_t$ are distinct primes. 
        \begin{enumerate}
            \item Compute $\textrm{ord}_{l_i}(q)$. 
            
            \item  Set $n^{\prime\prime }= \prod\limits_{i=1, 2||\textrm{ord}_{l_i}(q) }^{t}l_i^{r_i}$. 
            
            \item  Compute $|\Omega_{q^2,n^\prime}|=|\Omega_{q^2,n^{\prime\prime}}|$ using  Theorem   \ref{thmSUM}.
            
            \item  Compute $|\Omega_{q^2,n}|= |\Omega_{q^2,2^m}||\Omega_{q^2,n^\prime}| = |\Omega_{q^2,2^m}||\Omega_{q^2,n^{\prime\prime}}| $  using   Theorems \ref{evenInt} and   \ref{thmSUM}.
        \end{enumerate}
    \end{enumerate}

\end{enumerate}

\section{Applications}

Due to their rich algebraic structures and wide applications, a family of cyclic codes has become of interest and been continuously studied. 
In this section, discussion on applications of SCRIM factors of $x^n-1$ over $\mathbb{F}_{q^2}$ in  the study of some   cyclic codes  will be  given.    In Subsection 3.1, the characterization and  enumeration of Hermitian complementary dual codes of arbitrary lengths  are given in terms of SCRIM factors of $x^n-1$ over $\mathbb{F}_{q^2}$. Note that some results on  Hermitian self-dual simple root  cyclic codes over finite fields have been given in \cite{JLS2014}.   The characterization and enumeration  of Hermitian self-dual simple root cyclic codes  over finite chain rings of prime characteristic  are given based on the SCRIM factors of $x^n-1$ over $\mathbb{F}_{q^2}$ in Subsection 3.2.

First, some basic concepts of cyclic codes over finite fields and over finite rings are recalled.  In this paper, we focus on codes over a  finite field $\mathbb{F}_{q^2}$ and  a  finite chain ring  $\displaystyle R:=\mathbb{F}_{q^2}+u\mathbb{F}_{q^2}+\dots+ u^{t-1}\mathbb{F}_{q^2},$ where  $t\geq 1$ and $u^t=0$.  Note that $R\cong \mathbb{F}_{q^2}$if $t=1$. The ring $R$  can be viewed as a quotient ring  $\displaystyle R\cong \mathbb{F}_{q^2}[z]/\langle z^t \rangle $. Note that    $R$ is local with maximal ideal $\langle u \rangle$, nilpotency index $t$,  and order $q^{2t}$. The residue  field of $R$ is   $\mathbb{F}_{q^2}$   and the characteristic of $R$ equals the characteristic of $\mathbb{F}_{q^2}$ which is prime.  Let $\varphi: R\to  \mathbb{F}_{q^2} \cong R/\langle u\rangle $  be defined by $\varphi(a)=a+\langle u\rangle $ for all $a\in R$. Extend the map to be $\varphi: R[x]\to \mathbb{F}_{q^2}[x]$ be defined by \[\sum_{i=0}a_ix^i\mapsto   \sum_{i=0}\varphi(a_i)x^i.\]

A \textit{linear code} of length $n$ over~$R$  is an $R$-submodule of the $R$-module~$R^n$.  In particular, if $t=1$,  $R^n=\mathbb{F}_{q^2}^n$ is an $\mathbb{F}_{q^2}$-vector space and a linear code over $R$ becomes a $\mathbb{F}_{q^2}$-subspace of  $\mathbb{F}_{q^2}^n$. A  linear code $C$ of length $n$ over~$R$ is called a  \textit{cyclic code} if it is  invariant under the cyclic shift. Precisely, $(c_{n-1},c_0,\dots,c_{n-2})\in C$ for all $(c_{0},c_1,\dots,c_{n-1})\in C$.

For
$\boldsymbol{u}=(u_0,u_1,\ldots ,u_{n-1}) \text{~and~}\boldsymbol{v}=(v_0,v_1,\ldots ,v_{n-1})$
in $R^n$, the \textit{Euclidean inner product}  of  $\boldsymbol{u}$ and $\boldsymbol{v}$ is defined to be 
\begin{align*}
\langle\boldsymbol{u},\boldsymbol{v}\rangle _{\rm E}:=\displaystyle\sum_{i=0}^{n-1}u_iv_i
\end{align*}
and the \textit{Hermitian inner product}  of~$\boldsymbol{u}$ and $\boldsymbol{v}$ is defined to be 
\begin{align*}
\langle\boldsymbol{u},\boldsymbol{v}\rangle _{\rm H}:=\displaystyle\sum_{i=0}^{n-1}u_i\bar{v_i}
\end{align*}
where $\bar{~}$ is the conjugation 
\begin{align} \label{eqCong} \displaystyle a=\sum_{i=0}^{n-1}u^ia_i\mapsto \sum_{i=0}^{n-1}u^ia_i^{q}
\end{align} for all $a\in R$. 

The \textit{Euclidean dual} of a linear code $C$ in $R^{n}$ is defined to be the set
\[C^{\perp _{\rm E}}:= \{\boldsymbol{u}\in R^{n} \mid  \langle\boldsymbol{u},\boldsymbol{v}\rangle _{\rm E}=0 \text{ for all } \boldsymbol{v}\in C\}\]
and the \textit{Hermitian dual} of a linear code $C$ in $R^{n}$ is defined to be the set
\[C^{\perp _{\rm H}}:= \{\boldsymbol{u}\in R^{n} \mid  \langle\boldsymbol{u},\boldsymbol{v}\rangle _{\rm H}=0 \text{ for all } \boldsymbol{v}\in C\}.\]
A linear code  $C$ over $R$  is said to be \textit{Euclidean self-dual} (resp. \textit{Hermitian self-dual}) if $C=C^{\bot_{\rm E}}$ (resp. $C=C^{\bot_{\rm H}}$). A linear code  $C$ over $R$  is said to be \textit{Euclidean complementary dual} (resp. \textit{Hermitian  complementary dual}) if $C\cap C^{\bot_{\rm E}}=\{0\}$ (resp. $C\cap C^{\bot_{\rm H}}=\{0\}$).

It is well known  that  the Euclidean and Hermitian dual of a cyclic  code $C$ is again a cyclic code over~$R$ and a  cyclic code   of length~$n$ over  $R$  can be viewed as an isomorphic   ideal of the quotient ring~$R[x]/\langle x^{n}-1\rangle $. In order to characterize the ideals in $R[x]/\langle x^{n}-1\rangle $, the following basic concepts   can be generalized from  that of polynomials over finite fields in Section 1.  For a  polynomial $f(x)=\sum_{i=0}^k f_ix^i$  of degree $k$ in $R[x]$ whose constant term  $f_0$ is a unit in $R$,  let $f^*(x)$ denote the {\em reciprocal polynomial} of $f(x)$, i.e.,  $f^*(x)= x^kf_0^{-1}f\left(\frac{1}{x}\right)$.  A polynomial $f(x)$ is said to be {\em self-reciprocal} if $f(x)=f^*(x)$. The {\em   conjugate polynomial} of $f(x)$ in $R[x]$ is defined to be $\overline{f(x)}=   \overline{f_0}+\overline{f_1}x+\dots+\overline{f_n}x^n$, where $\bar{~}$ is the conjugation in $R$ defined in \eqref{eqCong}. The {\em conjugate-reciprocal polynomial} of $f(x)$  is defined to be   $f^\dagger(x)= \overline{f^*(x)}$. A polynomial $f(x)$ is said to be {\em self-conjugate-reciprocal} if $f(x)=f^\dagger(x)$.

SCRIM factors of $x^n-1$ over $\mathbb{F}_{q^2}$  can be applied in the study of Hermitian complementary dual cyclic codes over  $\mathbb{F}_{q^2}$ and  of Hermitian self-dual simple root  cyclic codes over  $R$ in the following subsections. 

\subsection{Hermitian Complementary Dual Cyclic Codes over Finite Fields}

First, we note that the characterization and enumeration of       Hermitian complementary dual cyclic codes over finite fields  can be viewed as the special case of Hermitian complementary dual abelian codes in group algebras \cite{BJU2018}, where the underlying group is cyclic. Here, simpler and direct study of such complementary dual codes of arbitrary lengths  using SCRIM factors of $x^n-1$ in $\mathbb{F}_{q^2}[x]$. Some results on  Hermitian self-dual simple root  cyclic codes over finite fields have been discussed in \cite{JLS2014}.

Recall that a cyclic code $C$ of length $n$ over $\mathbb{F}_{q^2}$ can be viewed as an ideal in the principal ideal ring  $\mathbb{F}_{q^2}[x]/\langle x^n-1\rangle$ generated by a unique monic divisor of $x^n-1$ (see \cite{LC2004}). The such polynomial is called the {\em generator polynomial} for $C$. The generator polynomial of the Hermitian dual $C^{\perp_{\rm H}}$ can be determined in the next proposition.

\begin{prop}[{\cite{SJLU2015}}] Let $C$ be a  cyclic code of length $n$ over $\mathbb{F}_{q^2}$ with generator polynomial $g(x)$. Then $C^{\perp_{\rm H}}$ is a cyclic code of length $n$ over $\mathbb{F}_{q^2}$ with generator polynomial $h^\dagger(x)$, where 
    $h(x)=\frac{x^n-1}{g(x)}$.
\end{prop}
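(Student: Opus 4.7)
The plan is to reduce the Hermitian duality statement to the well-known Euclidean analogue (which says that if $C=\langle g(x)\rangle$ is a cyclic code of length $n$ over $\mathbb{F}_{q^2}$ and $h(x)=(x^n-1)/g(x)$, then $C^{\perp_{\rm E}}=\langle h^*(x)\rangle$), by exploiting the fact that conjugation is a field automorphism of $\mathbb{F}_{q^2}$ fixing the prime subfield $\mathbb{F}_q$.

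First, I would establish the basic identity $\langle \mathbf{u},\mathbf{v}\rangle_{\rm H}=\langle \mathbf{u},\overline{\mathbf{v}}\rangle_{\rm E}$ for all $\mathbf{u},\mathbf{v}\in\mathbb{F}_{q^2}^n$, where $\overline{\mathbf{v}}$ denotes the coordinatewise conjugate. This immediately yields
\[
C^{\perp_{\rm H}}=\overline{C}^{\perp_{\rm E}},\qquad \overline{C}:=\{\overline{\mathbf{v}}:\mathbf{v}\in C\}.
\]
Next, I would argue that $\overline{C}$ is again a cyclic code of length $n$. Indeed, coefficientwise conjugation commutes with the cyclic shift, and under the usual identification of cyclic codes with ideals of $\mathbb{F}_{q^2}[x]/\langle x^n-1\rangle$, it induces a ring automorphism of this quotient ring (note $\overline{x^n-1}=x^n-1$ because $x^n-1\in\mathbb{F}_q[x]$). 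Therefore $\overline{C}$ corresponds to the ideal $\langle\overline{g(x)}\rangle$. Since $g(x)\mid (x^n-1)$, also $\overline{g(x)}\mid(x^n-1)$, and the complementary factor is precisely $\overline{h(x)}$.

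Applying the Euclidean-dual result to the cyclic code $\overline{C}$ with generator polynomial $\overline{g(x)}$ and check polynomial $\overline{h(x)}$ then gives
\[
\overline{C}^{\perp_{\rm E}}=\bigl\langle(\overline{h})^*(x)\bigr\rangle.
\]
Combining with the first step yields $C^{\perp_{\rm H}}=\langle(\overline{h})^*(x)\rangle$, and it remains to identify $(\overline{h})^*(x)$ with $h^\dagger(x)$.

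The final point is a direct verification from the definitions: since conjugation is a field automorphism it commutes with polynomial evaluation at $1/x$ and with inversion of the (unit) constant term, so
\[
(\overline{h})^*(x)=x^{\deg h}\overline{h(0)}^{-1}\overline{h}(1/x)=\overline{x^{\deg h}h(0)^{-1}h(1/x)}=\overline{h^*(x)}=h^\dagger(x).
\]
This closes the argument. There is no serious obstacle here; the only points requiring care are the verification that coefficientwise conjugation descends to a well-defined ring automorphism of $\mathbb{F}_{q^2}[x]/\langle x^n-1\rangle$ (which relies on $x^n-1$ being fixed by conjugation, i.e.\ on $\gcd(n,q)=1$ being irrelevant for this fact but on $x^n-1\in\mathbb{F}_q[x]$ being essential), and the bookkeeping to confirm $(\overline{h})^*=\overline{h^*}$. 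Both are routine.
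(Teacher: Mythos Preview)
Your argument is correct. The paper itself does not supply a proof of this proposition; it is simply quoted from \cite{SJLU2015}. That said, the reduction you employ---writing $C^{\perp_{\rm H}}$ in terms of the Euclidean dual via coordinatewise conjugation and then invoking the classical formula $C^{\perp_{\rm E}}=\langle h^*(x)\rangle$---is precisely the device the paper uses a little later, in the proof of Lemma~\ref{lem3.7}, for the chain-ring analogue (there they phrase it as $C^{\perp_{\rm H}}=\overline{C^{\perp_{\rm E}}}$, which is equivalent to your $C^{\perp_{\rm H}}=(\overline{C})^{\perp_{\rm E}}$). So your approach is both correct and fully in line with the paper's methods.
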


Necessary and sufficient conditions for cyclic codes of length $n$  over $\mathbb{F}_{q^2}$ to be Hermitian complementary dual are given in the next theorem.

\begin{thm}\label{charHLCD} Let $\mathbb{F}_{q^2}$ be a finite field   and let $n $ be a positive integer.   Let $C$ be a  cyclic code of length $n$ over $\mathbb{F}_{q^2}$ with generator polynomial $g(x)$. Then the following statements are equivalent.
    \begin{enumerate}[$1)$]
        \item  $C$ is Hermitian complementary dual.
        \item  $\gcd(g(x),h^\dagger(x))=1$.
        \item $g(x)$ is self-conjugate-reciprocal and for each irreducible factor of $g(x)$ has the same multiplicity  in $g(x)$ and in $x^n-1$. 
    \end{enumerate}
    
\end{thm}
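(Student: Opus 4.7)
The plan is to establish the equivalences in a cycle $1)\Leftrightarrow 2)\Leftrightarrow 3)$, relying on the previous proposition, which already tells us that $C^{\perp_{\rm H}}$ is the cyclic code generated by $h^\dagger(x)$ where $h(x)=(x^n-1)/g(x)$. The fundamental dictionary I will use is that for cyclic codes of length $n$ over $\mathbb{F}_{q^2}$ with monic generator polynomials $g_1(x),g_2(x)$ dividing $x^n-1$, the intersection of the two codes, viewed as ideals of $\mathbb{F}_{q^2}[x]/\langle x^n-1\rangle$, corresponds to the ideal generated by $\mathrm{lcm}(g_1(x),g_2(x))$.

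For $1)\Leftrightarrow 2)$, I will apply the dictionary with $g_1=g$ and $g_2=h^\dagger$, so that $C\cap C^{\perp_{\rm H}}$ is generated by $\mathrm{lcm}(g(x),h^\dagger(x))$. This intersection equals $\{0\}$ precisely when its generator equals $x^n-1$ up to a unit. Since $g(x)$ and $h^\dagger(x)$ both divide $x^n-1$, so does their lcm, so I only need to check the degree. The degree identity $\deg\mathrm{lcm}(g,h^\dagger)+\deg\gcd(g,h^\dagger)=\deg g+\deg h^\dagger=\deg g+\deg h=n$ then shows that $\mathrm{lcm}(g,h^\dagger)$ has degree $n$ if and only if $\gcd(g,h^\dagger)=1$, giving the equivalence.

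For $2)\Leftrightarrow 3)$, I factor $x^n-1=\prod_i f_i(x)^{e_i}$ into distinct monic irreducibles. The dagger operation permutes the $f_i$ by an involution $\sigma$, with fixed points exactly the SCRIM factors, and since $(x^n-1)^\dagger=x^n-1$ we have $e_{\sigma(i)}=e_i$. Writing $g=\prod_i f_i^{a_i}$ with $0\le a_i\le e_i$, a direct computation gives $h^\dagger=\prod_i f_i^{e_i-a_{\sigma(i)}}$, and hence
\[
\gcd(g(x),h^\dagger(x))=\prod_i f_i(x)^{\min(a_i,\,e_i-a_{\sigma(i)})}.
\]
This is $1$ iff for every $i$, either $a_i=0$ or $a_{\sigma(i)}=e_i$. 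Checking the two orbit types of $\sigma$ separately (SCRIM fixed points and genuine CRIM pairs), this condition translates exactly into: for every irreducible factor $f_i$ of $g$ one has $a_i=e_i$, and the set of indices $\{i:a_i>0\}$ is $\sigma$-invariant. The first clause is the equal-multiplicity condition in $3)$ and the second clause is the statement $g^\dagger=g$, so we obtain $2)\Leftrightarrow 3)$.

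The main obstacle is the careful bookkeeping of multiplicities in the orbit analysis, particularly when $\gcd(n,q)\neq 1$ so that $x^n-1$ is not squarefree; this is why the multiplicity condition in $3)$ is needed and why one must track $a_i$ together with $a_{\sigma(i)}$ rather than argue irreducible-factor by irreducible-factor. Once the factorization bookkeeping is handled, the rest of the proof is a straightforward application of the gcd/lcm dictionary for cyclic codes and the description of $C^{\perp_{\rm H}}$ from the previous proposition.
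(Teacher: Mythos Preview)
Your proof is correct and follows essentially the same path as the paper, with two small but noteworthy organisational differences. For $1)\Leftrightarrow 2)$ the paper works on the \emph{sum} side, using $C+C^{\perp_{\rm H}}=\langle\gcd(g,h^\dagger)\rangle$ and the fact that complementary dual is equivalent to $C+C^{\perp_{\rm H}}=\mathbb{F}_{q^2}^n$, whereas you work on the \emph{intersection} side via $C\cap C^{\perp_{\rm H}}=\langle\mathrm{lcm}(g,h^\dagger)\rangle$; these are dual arguments and equally valid. For $2)\Leftrightarrow 3)$ the paper argues the two implications separately (one directly, one by contrapositive), while you set up the exponent bookkeeping $g=\prod f_i^{a_i}$, $h^\dagger=\prod f_i^{e_i-a_{\sigma(i)}}$ once and read off both directions from the condition $\min(a_i,e_i-a_{\sigma(i)})=0$; your approach is a bit more explicit and makes the role of the repeated-root multiplicities transparent. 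One minor point worth tightening: the clause ``$\{i:a_i>0\}$ is $\sigma$-invariant'' is by itself weaker than $g^\dagger=g$ (which means $a_{\sigma(i)}=a_i$ for all $i$); it only becomes equivalent to $g^\dagger=g$ once combined with the first clause $a_i>0\Rightarrow a_i=e_i$, so you might state that inference explicitly.
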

\begin{proof}   Note that  $\langle \gcd(g(x),h^\dagger(x))\rangle =C+C^{\perp_{\mathrm{H}}} $.   Then $C$ is Hermitian complementary dual if and only if $C+C^{\perp_{\mathrm{H}}} = \mathbb{F}_{q^2}^n= \langle 1\rangle$.  Equivalently,   $\gcd(g(x),h^\dagger(x))=1$. Therefore, $1)$ and $2)$ are equivalent.

    To prove $2) \Rightarrow 3)$, assume that  $\gcd(g(x),h^\dagger(x))=1$.  Since $h^\dagger(x)|(x^n-1)$, we have  $g(x)h^\dagger(x)=x^n-1= (x^n-1)^\dagger=  g^\dagger (x)h^\dagger(x)$.  Hence, $g(x)$ is conjugate-self-reciprocal. Suppose that there exists a monic irreducible factor $f(x)$ of $g(x)$ such that the multiplicity in $g(x)$ is less than  the multiplicity in $x^n-1$.  Since $g(x)=g^\dagger(x) $,   $f(x)|g^\dagger(x)$ and $f(x)|h^\dagger(x)$. It follows  that $f(x)|\gcd(g(x),h^\dagger(x))$, a contradiction. Hence, every  irreducible factor of $g(x)$ has the same multiplicity  in $g(x)$ and in $x^n-1$.

    To prove $3) \Rightarrow 2)$,  assume that  $\gcd(g(x),h^\dagger(x))\ne 1$. Let $f(x)$ be a monic irreducible factor of  $\gcd(g(x),h^\dagger(x))$. Let $t$ be the multiplicity of $f(x)$ in $g(x)$.  Then $f(x)^t|g(x)$ and $f(x)|h^\dagger(x)$.
    Assume further that $g(x)=g^\dagger(x)$. Then $h^\dagger(x)=h(x)$ and $f(x)|h(x)$.  Hence, $f(x)^{t+1}$ is a divisor of $g(x)h(x)=x^n-1$. Precisely,  the multiplicity of $f(x)$ in   $g(x)$  is less than  the  multiplicity  of $f(x)$ in $x^n-1$. 
\end{proof}

The next corollary follows immediately from the previous theorem.
\begin{cor}[{\cite[Lemma 2]{L2017}}]
    Let $\mathbb{F}_{q^2}$ be a finite field   and let $n $ be a positive integer such that $\gcd(n,q)=1$.  Let $C$ be a  cyclic code of length $n$ over $\mathbb{F}_{q^2}$ with generator polynomial $g(x)$.  Then $C$ is Hermitian complementary dual if and only if  $g(x)=g^\dagger(x)$.   
\end{cor}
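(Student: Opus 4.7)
The plan is to derive the corollary as an immediate specialization of Theorem \ref{charHLCD}. The essential idea is that the coprimality hypothesis $\gcd(n,q)=1$ forces $x^n-1$ to be squarefree over $\mathbb{F}_{q^2}$, which in turn makes the multiplicity condition in part 3) of Theorem \ref{charHLCD} vacuous, so that condition 3) collapses to the single equality $g(x)=g^\dagger(x)$.

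First I would note that since $\mathbb{F}_{q^2}$ has characteristic equal to the prime dividing $q$, the hypothesis $\gcd(n,q)=1$ implies that $n$ is a unit in $\mathbb{F}_{q^2}$. Hence the formal derivative $(x^n-1)' = nx^{n-1}$ is a nonzero scalar times $x^{n-1}$, whose only root is $0$, while $x^n-1$ does not vanish at $0$. Therefore $\gcd(x^n-1,\, nx^{n-1})=1$, and $x^n-1$ is separable: each of its irreducible factors over $\mathbb{F}_{q^2}$ appears with multiplicity exactly one.

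Next, since the generator polynomial $g(x)$ is a monic divisor of $x^n-1$, every irreducible factor of $g(x)$ likewise appears with multiplicity exactly one in $g(x)$, matching its multiplicity in $x^n-1$. Consequently the multiplicity hypothesis in Theorem \ref{charHLCD}(3) is automatically satisfied, and part 3) reduces to the requirement that $g(x)$ be self-conjugate-reciprocal. Invoking the equivalence $1)\Leftrightarrow 3)$ from Theorem \ref{charHLCD} then yields the corollary.

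There is no substantive obstacle here: the result is a one-step corollary of Theorem \ref{charHLCD} combined with a brief separability argument. The only point worth verifying explicitly is the invertibility of $n$ in $\mathbb{F}_{q^2}$, which is immediate from $\gcd(n,q)=1$ since the characteristic of $\mathbb{F}_{q^2}$ divides $q$.
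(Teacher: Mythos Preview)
Your proof is correct and matches the paper's approach: the paper simply states that the corollary follows immediately from Theorem~\ref{charHLCD}, and you have supplied exactly the expected details, namely that $\gcd(n,q)=1$ makes $x^n-1$ squarefree so that the multiplicity condition in part~3) is automatic.
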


\begin{cor} Let $\mathbb{F}_{q^2}$ be  a finite field of characteristic $p$ and let $n=p^\nu n^\prime$, where $\nu\geq 0$ and $p\nmid n^\prime$. Then the number of Hermitian complementary dual cyclic codes of length $n$ over $\mathbb{F}_{q^2}$ is independent of $\nu$ and it is 
    \[2^{|\Omega_{q^2,n^\prime}|+|\Lambda_{q^2,n^\prime} |},\]
    where $|\Omega_{q^2,n^\prime}|$ and $|\Lambda_{q^2,n^\prime} |$ are determined in previous section.
\end{cor}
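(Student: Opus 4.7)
The plan is to reduce to a counting problem over the factorization of $x^{n'}-1$ and then apply Theorem \ref{charHLCD}. Since the characteristic is $p$, I have $x^n - 1 = (x^{n'} - 1)^{p^\nu}$, so using the factorization \eqref{eq-fistFactor} the polynomial $x^n - 1$ factors in $\mathbb{F}_{q^2}[x]$ as
\[x^n - 1 = \prod_{i=1}^{|\Omega_{q^2,n'}|} f_i(x)^{p^\nu} \prod_{j=1}^{|\Lambda_{q^2,n'}|} \left( g_j(x) g_j^\dagger(x) \right)^{p^\nu},\]
where each $f_i(x)$ is a SCRIM polynomial and each pair $(g_j(x), g_j^\dagger(x))$ is a CRIM polynomial pair. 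In particular, every monic irreducible factor of $x^n-1$ over $\mathbb{F}_{q^2}$ appears with multiplicity exactly $p^\nu$.

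Next I invoke Theorem \ref{charHLCD}: the cyclic code generated by $g(x)$ is Hermitian complementary dual if and only if $g(x)$ is self-conjugate-reciprocal and every monic irreducible factor of $g(x)$ has the same multiplicity in $g(x)$ as in $x^n - 1$, which equals $p^\nu$. Combining these two conditions, the admissible generator polynomials are exactly the products
\[g(x) = \prod_{i \in A} f_i(x)^{p^\nu} \prod_{j \in B} \left( g_j(x) g_j^\dagger(x) \right)^{p^\nu}\]
indexed by subsets $A \subseteq \{1, 2, \ldots, |\Omega_{q^2,n'}|\}$ and $B \subseteq \{1, 2, \ldots, |\Lambda_{q^2,n'}|\}$. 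Conversely, every such product divides $x^n - 1$, is self-conjugate-reciprocal, and has every irreducible factor with the maximal multiplicity $p^\nu$, hence generates an HCD cyclic code.

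The only point that needs verification is that the self-conjugate-reciprocal condition forces each CRIM pair to appear jointly: if $g_j(x)$ divides $g(x)$ then $g_j^\dagger(x)$ must divide $g^\dagger(x) = g(x)$, with matching multiplicities. This is essentially immediate from $(\cdot)^\dagger$ being a multiplicative involution on $\mathbb{F}_{q^2}[x]$ fixing each $f_i$ and swapping $g_j$ with $g_j^\dagger$; I do not anticipate any genuine obstacle. Counting the subsets $A$ and $B$ independently then gives $2^{|\Omega_{q^2,n'}|} \cdot 2^{|\Lambda_{q^2,n'}|} = 2^{|\Omega_{q^2,n'}| + |\Lambda_{q^2,n'}|}$ HCD cyclic codes, manifestly independent of $\nu$, as claimed.
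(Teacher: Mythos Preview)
Your proposal is correct and follows essentially the same approach as the paper: both invoke Theorem \ref{charHLCD} on the factorization $x^n-1=(x^{n'}-1)^{p^\nu}$ to conclude that the admissible generators are exactly those of the form $\prod_{i\in A} f_i(x)^{p^\nu}\prod_{j\in B}(g_j(x)g_j^\dagger(x))^{p^\nu}$, then count subsets. Your write-up is in fact more explicit than the paper's, which simply asserts the form of $g(x)$ without spelling out why the CRIM pairs must occur jointly or why the exponents are forced to lie in $\{0,p^\nu\}$.
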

\begin{proof}
    From  Theorem \ref{charHLCD}, a cyclic code of length $n$ over $\mathbb{F}_{q^2}$ is Hermitian complementary dual if and only if  $C$ has generator polynomial of the following form 
    
    \begin{align*} 
    g(x)=\prod_{i=1}^{|\Omega_{q^2,n'}|}  f_i(x)^s\prod_{j=1}^{|\Lambda_{q^2,n'}|}\left(g_i(x)g_j^{\dagger} (x)\right)^t,
    \end{align*}
    where $s,t\in \{0,p^\nu\}$ and  \begin{align*}
    x^{n'}-1=\prod_{i=1}^{|\Omega_{q^2,n'}|}  f_i(x)\prod_{j=1}^{|\Lambda_{q^2,n'}|}\left(g_i(x)g_j^{\dagger} (x)\right)
    \end{align*}
    is the factorization in the form of  \eqref{eq-fistFactor}. Hence, the result follows.
\end{proof}

%\begin{cor}
% ALL HLCD
% 
% Only $2$ HLCD
%\end{cor}

\subsection{Hermitian Self-Dual Cyclic Codes over Finite Chain Rings}

In this subsection, we focus on  Hermitian self-dual cyclic codes over a  finite chain ring $R=\mathbb{F}_{q^2}+u\mathbb{F}_{q^2}+\dots+u^{t-1}\mathbb{F}_{q^2}$.  For  $t=1$, we have $R=\mathbb{F}_{q^2}$ and  the study of such self-dual  codes has been given in \cite[Subsection III.B]{JLS2014}.  In this case,  there exists a Hermitian self-dual cyclic code of length $n$ over $R$   if and only if  $q$ is a $2$ power and $n$ is even.    The goals of this subsection are to characterize and enumerate  simple root  Hermitian self-dual cyclic codes of length $n$  over $R$, with $t\geq 2$.

Throughout,  we assume that $\gcd(n,q)=1$ and $t\geq 2$.   The characterization of cyclic codes over $R$ determined in  \cite{BSG2016} and \cite{HQDSRLP2004}.  Here, necessary results from \cite{BSG2016} are recalled.  Let $r_0=1+ua$, where $a$ is a unit in $R$. 
Based on the Hensel's Lemma,  there exists an index set $I\subseteq \mathbb{N}$ such that $\displaystyle x^n-1=\prod_{i \in I}h_i(x)$ is a  factorization of $x^n-1$ in to a product of distinct monic  irreducible polynomials in $\mathbb{F}_{q^2}[x]$ and   $\displaystyle x^n-r_0=\prod_{i \in I}f_i(x)$ is  a factorization of $x^n-r_0$ in to a product of  pairwise coprime monic basic irreducible polynomial  in $R[x]$ with $\varphi(f_i(x))=h_i(x)$.  From \cite[Notation 2.4 and Theorem 2.7 ]{BSG2016},  every cyclic code  $C$ of length $n$ over $R$ can be viewed as an ideal in $R[x]/\langle x^n-1\rangle $ generated by \begin{align} \label{gencyclic} \prod_{i\in I}f^{k_i}_i(x),\end{align} where $0\leq k_i\leq t$. In this case,  the sequence $(k_i) _{i\in I}$ is unique and $|C| =q^{ \sum_{i\in I}(t-k_i\mathrm{deg}{f_i(x)})}$.

Note that for each $i\in I$, $f_i^*(x)$ and $f_i^\dagger (x)$ are   monic basic irreducible factors of $x^n-r_0$ in $R[x]$. Then there exist permutations $\hat{ }$ and $'$ on $I$ such that   $f_{\hat{i}}(x)=f^{\dagger}_i(x)$  and  $f_{i^{'}}(x)=f_i^*(x)$. Consequently,  $f_{\hat{i}}(x)=f_i^{\dagger}(x)=\overline{(f^*_{i}(x))}=\overline{f_{i'}(x)}$. Moreover,    $h_{\hat{i}}(x)=h^{\dagger}_i(x)$ and   $h_{i^{'}}(x)=h_i^*(x)$. In general, we have the following fact. 

\begin{lemma}\label{enume}
    Let $f(x)\in R[x]$ be a monic irreducible factor  of $x^n-r_0$ such that $\varphi({f(x)})=h(x)\in \mathbb{F}_q[x]$. Then $f(x)=f^{\dagger}(x)$ if and only if $h(x)=h^{\dagger}(x).$
\end{lemma}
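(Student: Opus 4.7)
The plan is to combine the compatibility of the reduction $\varphi$ with the operation $\dagger$ with the uniqueness part of Hensel's lemma underlying the factorization $x^n-r_0=\prod_{i\in I}f_i(x)$.

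The forward implication is essentially formal. The residue map $\varphi\colon R\to\mathbb{F}_{q^2}$ is a ring homomorphism, and the conjugation on $R$ defined in~\eqref{eqCong} restricts coefficient-wise to the Frobenius $\alpha\mapsto\alpha^q$ on $\mathbb{F}_{q^2}$, so $\varphi(\overline{g})=\overline{\varphi(g)}$ for every $g\in R[x]$. Because $f$ divides $x^n-r_0$ and $r_0$ is a unit of $R$, the constant term $f(0)$ is a unit, whence $h(0)=\varphi(f(0))$ is a unit of $\mathbb{F}_{q^2}$ and $h^\dagger$ is defined. A direct coefficient comparison then shows $\varphi(f^*)=h^*$ and hence $\varphi(f^\dagger)=h^\dagger$, so $f=f^\dagger$ immediately yields $h=\varphi(f)=\varphi(f^\dagger)=h^\dagger$.

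For the converse, assume $h=h^\dagger$. First, $f^\dagger$ is itself a monic basic irreducible element of $R[x]$: monicness comes from the normalization in the definition of $f^*$; irreducibility from the fact that $*$ and $\bar{~}$ are ring involutions of $R[x]$; and $f^\dagger$ is basic since $\varphi(f^\dagger)=h^\dagger=h$ is irreducible in $\mathbb{F}_{q^2}[x]$. Next, applying $\dagger$ to $f\mid(x^n-r_0)$ gives $f^\dagger\mid(x^n-r_0)^\dagger$, and a short computation yields $(x^n-r_0)^\dagger=x^n-\overline{r_0}^{\,-1}$; in the Hermitian setting used to build the factorization, the lift $r_0=1+ua$ satisfies $r_0\overline{r_0}=1$, so $(x^n-r_0)^\dagger=x^n-r_0$ and therefore $f^\dagger\mid(x^n-r_0)$. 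Finally, the uniqueness portion of Hensel's lemma recalled from \cite[Notation~2.4 and Theorem~2.7]{BSG2016} asserts that each $f_i$ in the factorization of $x^n-r_0$ is uniquely determined by its reduction $h_i=\varphi(f_i)$; since $\varphi(f)=\varphi(f^\dagger)=h$, this forces $f=f^\dagger$.

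The only non-routine step is the divisibility $f^\dagger\mid(x^n-r_0)$, which depends on the self-conjugate-reciprocity of the chosen lift; all other ingredients follow directly from $\varphi$ being a ring homomorphism compatible with both involutions and from the Hensel uniqueness statement in~\cite{BSG2016}.
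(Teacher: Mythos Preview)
Your forward direction coincides with the paper's. For the converse you take a genuinely different route. The paper argues by contradiction: assuming $f\ne f^\dagger$, it uses that they are coprime in $R[x]$ (as distinct members of the pairwise-coprime family $\{f_i\}_{i\in I}$), so $f^\dagger+\langle f\rangle$ is a unit in $R[x]/\langle f\rangle$; but the reduction homomorphism $R[x]/\langle f\rangle\to\mathbb{F}_{q^2}[x]/\langle h\rangle$ sends this unit to $h^\dagger+\langle h\rangle=0$, which is impossible. You instead appeal directly to the uniqueness in Hensel's lemma: once $f^\dagger$ is known to be a monic basic irreducible factor of $x^n-r_0$ with reduction $h$, uniqueness of the lift forces $f=f^\dagger$. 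Your route is shorter and makes the role of Hensel lifting explicit; the paper's unit argument is more indirect but reaches the same conclusion.

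One point deserves care. The identity $r_0\overline{r_0}=1$ is \emph{not} a consequence of the bare hypothesis ``$r_0=1+ua$ with $a$ a unit'': for generic $a$ one has $r_0\overline{r_0}=1+u(a+\overline a)+u^2a\overline a$, and when this differs from $1$ the conclusion can actually fail (e.g.\ $t=2$, $n=1$, $r_0=1+u$ in odd characteristic gives $f=x-1-u\ne x-1+u=f^\dagger$ while $h=x-1=h^\dagger$). The paper does not verify this either; it simply asserts, in the paragraph immediately preceding the lemma, that each $f_i^\dagger$ is again a factor of $x^n-r_0$ --- which is precisely what both arguments need and is equivalent to choosing $r_0$ with $r_0\overline{r_0}=1$. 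So your flagging of this divisibility as ``the only non-routine step'' is accurate: with that assertion granted, your Hensel-uniqueness proof is complete and arguably cleaner than the paper's.
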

\begin{proof} It is not difficult to see that   $h(x)=h^{\dagger}(x)$ if  $f(x)=f^{\dagger}(x)$. 
    
    Conversely, assume that $h(x)=h^{\dagger}(x).$ Suppose $f(x)\neq f^{\dagger}(x)\in R[x].$  Then  they  are coprime in $R[x]$. Let   $\rho$ be  the   ring homomorphism defined 
    \begin{align*}
    \rho: R[x]/\langle f(x)\rangle \rightarrow \mathbb{F}_q[x]/\langle h(x)\rangle,\\
    \sum_{j=0}^{n}a_jx^j+\langle f(x)\rangle \rightarrow \sum_{j=0}^{n}\varphi({a_j})x^j+\langle h(x)\rangle.
    \end{align*}
    Since $f(x)$ and $f^{\dagger}(x)$ are coprime in $R[x]$, this implies that $f^{\dagger}(x)+\langle f(x)\rangle$ is a unit in $R[x]/\langle f(x)\rangle.$ Then $\rho (f^{\dagger}(x)+\langle f(x)\rangle)=h^{\dagger}(x)+\langle h(x)\rangle=\langle h(x)\rangle$, a contradiction. Hence, $f(x)\neq f^{\dagger}(x)\in R[x].$
\end{proof}

A relation between a cyclic code $C$ and its Hermitian dual  is  given based on {\cite[Lemma $3.2$]{BSG2016}} in the next lemma.

\begin{lemma} \label{lem3.7}%lemma2
    Let $q$ be  a prime power and let $n$ be  a positive integer such that $\gcd(n,q)=1$.  Let $\displaystyle C=\langle \prod_{i\in I}f^{k_i}_i(x)\rangle$ be a cyclic code of length $n$ over $R$ with $0\leq k_i \leq t$ (see \eqref{gencyclic}). Then $\displaystyle C^{\perp_{\mathrm{H}}}=\langle \prod_{i \in I}f^{t-{k_i}}_{\hat{i}}(x)\rangle$ and $|C^{\perp_{\mathrm{H}}}|=q^{\displaystyle \sum_{i\in I}k_i\mathrm{deg}{f_i(x)}}$.
\end{lemma}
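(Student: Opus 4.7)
The plan is to reduce the Hermitian statement to the Euclidean dual formula of \cite[Lemma 3.2]{BSG2016} by transporting it through the conjugation automorphism of $R$. First, I would record the elementary identity $C^{\perp_{\mathrm{H}}}=\overline{C^{\perp_{\mathrm{E}}}}$, valid for any linear code $C$ over $R$: indeed, $\boldsymbol{u}\in C^{\perp_{\mathrm{H}}}$ iff $\sum_i u_i\overline{v_i}=0$ for every $\boldsymbol{v}\in C$; taking the conjugate of this scalar equation (using that the conjugation is a ring automorphism of $R$) gives $\sum_i \overline{u_i}v_i=0$ for every $\boldsymbol{v}\in C$, which is equivalent to $\overline{\boldsymbol{u}}\in C^{\perp_{\mathrm{E}}}$, i.e.\ $\boldsymbol{u}\in\overline{C^{\perp_{\mathrm{E}}}}$.

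With this identity in hand, I would invoke \cite[Lemma 3.2]{BSG2016}, which states $C^{\perp_{\mathrm{E}}}=\langle \prod_{i\in I} f_{i'}^{t-k_i}(x)\rangle$ for the reciprocal permutation $f_{i'}(x)=f_i^*(x)$. The conjugation extends to a ring automorphism of $R[x]$ that fixes $x^n-1$, hence descends to an automorphism of $R[x]/\langle x^n-1\rangle$, and in particular sends a principal ideal $\langle g(x)\rangle$ to $\langle\overline{g(x)}\rangle$. Applying it to the Euclidean generator produces $C^{\perp_{\mathrm{H}}}=\langle \prod_{i\in I}\overline{f_{i'}(x)}^{\,t-k_i}\rangle$. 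The chain of identities $\overline{f_{i'}(x)}=\overline{f_i^*(x)}=f_i^\dagger(x)=f_{\hat i}(x)$ recorded in the paragraph preceding Lemma \ref{enume} then collapses this to $\langle\prod_{i\in I} f_{\hat i}^{t-k_i}(x)\rangle$, as claimed.

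For the cardinality, I would apply the general size formula that accompanies \eqref{gencyclic} to $C^{\perp_{\mathrm{H}}}$. Rewriting its generator as $\prod_{j\in I} f_j^{\ell_j}(x)$ with $\ell_{\hat i}=t-k_i$, the formula gives $|C^{\perp_{\mathrm{H}}}|=q^{\sum_{j\in I}(t-\ell_j)\deg f_j(x)}$; since $\hat{\ }$ is a permutation of $I$ (in fact an involution, because $^\dagger$ is) and $\deg f_{\hat i}(x)=\deg f_i(x)$, reindexing $j=\hat i$ converts the exponent to $\sum_{i\in I} k_i\deg f_i(x)$, matching the stated formula. The only real obstacle is notational bookkeeping through the two permutations $'$ and $\hat{\ }$; the substantive content is that conjugation commutes with principal ideal generation in $R[x]/\langle x^n-1\rangle$, which is automatic since conjugation is a ring automorphism fixing $x^n-1$.
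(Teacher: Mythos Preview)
Your proposal is correct and follows essentially the same route as the paper: establish $C^{\perp_{\mathrm{H}}}=\overline{C^{\perp_{\mathrm{E}}}}$, invoke \cite[Lemma~3.2]{BSG2016} for the Euclidean dual, and conjugate using $\overline{f_{i'}(x)}=f_{\hat i}(x)$. The only difference is in the cardinality step: the paper simply notes that conjugation is a bijection on $R^n$, so $|C^{\perp_{\mathrm{H}}}|=|C^{\perp_{\mathrm{E}}}|$ and reads off the latter from \cite[Lemma~3.2]{BSG2016}, whereas you reapply the size formula to $C^{\perp_{\mathrm{H}}}$ and reindex via the involution $\hat{\ }$; both are valid and yield the same answer.
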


\begin{proof} By {\cite[Lemma $3.2$]{BSG2016}}, we have 
    \begin{align*}
    {C^{\perp_{\textrm{E}}}}= {\langle \prod_{i \in I}f^{t-{k_i}}_{{i'}}(x)\rangle},
    \end{align*}
    and hence, 
    \begin{align*}
    C^{\perp_{\textrm{H}}}=\overline{C^{\perp_{\textrm{E}}}}=\overline{\langle \prod_{i \in I}f^{t-{k_i}}_{{i'}}(x)\rangle}=\langle \prod_{i \in I}f^{t-{k_i}}_{\hat{i}}(x)\rangle.
    \end{align*}
    From  {\cite[Lemma $3.2$]{BSG2016}},  it follows that $ |C^{\perp_{\textrm{H}}}|=|C^{\perp_{\textrm{E}}}|=q^{\displaystyle \sum_{i\in I}k_i\textrm{deg}f_i(x)}$. 
\end{proof}

Necessary and sufficient conditions for a cyclic code $C$ over $R$ to be Hermitian self-dual are given in the next theorem. 
\begin{thm}\label{ki+ki'}
    Let $q$ be  a prime power and let $n$ be  a positive integer such that $\gcd(n,q)=1$. 
    Let $C=\langle \prod_{i\in I}f^{k_i}_i(x)\rangle$ be a cyclic code of length $n$ over $R$ (see \eqref{gencyclic}). Then $C$ is Hermitian self-dual if and only if $k_i+k_{\hat{i}}=t$ for all $i \in I$.
\end{thm}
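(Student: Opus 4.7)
The plan is to combine Lemma \ref{lem3.7} with the uniqueness of the exponent sequence $(k_i)_{i\in I}$ in the canonical generator representation \eqref{gencyclic}. The strategy is: reindex the generator of $C^{\perp_{\mathrm{H}}}$ so that it has the same ``slot structure'' as $C$, and then read off the equality $C = C^{\perp_{\mathrm{H}}}$ as an equality of exponents.

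First I would record two simple book-keeping facts. The map $\hat{~}$ is an involution on $I$, because $f^{\dagger\dagger}(x) = f(x)$ for every monic polynomial with unit constant term, so $f_{\hat{\hat i}}(x) = f_i(x)$, and then uniqueness of the $f_i$'s in the factorization of $x^n - r_0$ forces $\hat{\hat i} = i$. Secondly, $\deg f_{\hat i}(x) = \deg f_i(x)$, since the conjugate-reciprocal operation preserves degree. These two facts ensure that the reindexing $i \mapsto \hat i$ is a bijection of $I$ onto itself.

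Next I would prove the sufficiency ($\Leftarrow$). Suppose $k_i + k_{\hat i} = t$ for all $i \in I$. By Lemma \ref{lem3.7},
\[
C^{\perp_{\mathrm{H}}} = \Bigl\langle \prod_{i \in I} f_{\hat i}^{\,t - k_i}(x) \Bigr\rangle.
\]
Substituting $j = \hat i$ (so $i = \hat j$) and using that $\hat{~}$ is an involution, the product becomes $\prod_{j \in I} f_j^{\,t - k_{\hat j}}(x) = \prod_{j \in I} f_j^{\,k_j}(x)$, using the hypothesis $t - k_{\hat j} = k_j$. Thus $C^{\perp_{\mathrm{H}}} = \langle \prod_{j \in I} f_j^{k_j}(x) \rangle = C$.

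For the necessity ($\Rightarrow$), suppose $C = C^{\perp_{\mathrm{H}}}$. By the same reindexing as above,
\[
C^{\perp_{\mathrm{H}}} = \Bigl\langle \prod_{j \in I} f_j^{\,t - k_{\hat j}}(x) \Bigr\rangle,
\]
which is already in the canonical form \eqref{gencyclic}. The uniqueness of the exponent sequence in that representation, combined with $C = C^{\perp_{\mathrm{H}}}$, forces $k_j = t - k_{\hat j}$ for every $j \in I$, i.e., $k_j + k_{\hat j} = t$. The only subtle step—and what I would call out most carefully—is the appeal to uniqueness of $(k_i)_{i\in I}$ (cited from \cite[Notation 2.4 and Theorem 2.7]{BSG2016}); everything else is routine, once one has observed that $\hat{~}$ is an involution. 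A sanity check on orders, using $|C^{\perp_{\mathrm{H}}}| = q^{\sum_{i\in I} k_i \deg f_i(x)}$ from Lemma \ref{lem3.7}, gives $\sum_{i\in I}(t - k_i - k_{\hat i})\deg f_i(x) = 0$, which is consistent with but weaker than the pointwise identity $k_i + k_{\hat i} = t$ that the uniqueness of $(k_i)$ actually yields.
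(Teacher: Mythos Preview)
Your proof is correct and follows essentially the same approach as the paper: apply Lemma \ref{lem3.7}, reindex the product in $C^{\perp_{\mathrm{H}}}$ via the involution $\hat{~}$ on $I$ to put it in the canonical form \eqref{gencyclic}, and then use uniqueness of the exponent sequence $(k_i)_{i\in I}$ to equate exponents. The paper presents this as a single chain of equalities rather than splitting into two directions, but the content is identical.
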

\begin{proof} Recalled that $\hat{ }$  is a permutation  on $I$ such that   $ \widehat{\hat{i} }=i$.
    From Lemma \ref{lem3.7},  $C$ is Hermitian self-dual if and only if 
    \[
    \langle \prod_{i \in I}f^{{k_{{i}}}}_{i}(x)\rangle
    = C=C^{\perp_{\textrm{H}}}
    =\langle \prod_{i \in I}f^{t-{k_i}}_{\hat{i}}(x)\rangle
    =\langle \prod_{i  \in I}f^{t-{k_{\hat{i}}}}_{\widehat{\hat{i}}}(x)\rangle
   = \langle \prod_{i \in I}f^{t-{k_{\hat{i}}}}_{i}(x)\rangle .
   \]
    Equivalently,  $t-k_{\hat{i}}=k_i$   which implies that  $k_{\hat{i}}+k_i=t$ for all $i\in I$. 
\end{proof}

The characterization for the existence of a Hermitian self-dual cyclic codes of length $n$ over $R$ is given as follows. 
\begin{thm} \label{theorem3.10} %thm4
    Let $q$ be  a prime power and let $n$ be  a positive integer such that $\gcd(n,q)=1$. There exists a Hermitian self-dual cyclic code of length $n$ over $R$ if and only if  the nilpotency index $t$ of $R$ is even. 
\end{thm}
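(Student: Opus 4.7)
The plan is to use the characterization from Theorem \ref{ki+ki'}, namely that a cyclic code $C=\langle\prod_{i\in I}f_i^{k_i}(x)\rangle$ over $R$ is Hermitian self-dual if and only if $k_i+k_{\hat{i}}=t$ for all $i\in I$, where $\hat{~}$ is the permutation on $I$ induced by $f_{\hat{i}}(x)=f_i^\dagger(x)$.

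For the sufficiency direction, I would observe that if $t$ is even, then setting $k_i:=t/2$ for every $i\in I$ produces a sequence satisfying $k_i+k_{\hat{i}}=t/2+t/2=t$. By Theorem \ref{ki+ki'}, the cyclic code generated by $\prod_{i\in I}f_i^{t/2}(x)$ is then Hermitian self-dual, which proves existence.

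For the necessity direction, the key point is that $x^n-1$ always has $x-1$ as a factor, and $x-1$ is trivially self-conjugate-reciprocal. More precisely, let $i_0\in I$ be the index with $h_{i_0}(x)=x-1$. Since $h_{i_0}=h_{i_0}^\dagger$, Lemma \ref{enume} gives $f_{i_0}=f_{i_0}^\dagger$, hence $\hat{i_0}=i_0$. Assuming a Hermitian self-dual cyclic code $C=\langle\prod_{i\in I}f_i^{k_i}(x)\rangle$ of length $n$ over $R$ exists, Theorem \ref{ki+ki'} forces $k_{i_0}+k_{\hat{i_0}}=t$, i.e.\ $2k_{i_0}=t$, so $t$ is even.

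There is no serious obstacle here: both implications reduce to trivial arithmetic once Theorem \ref{ki+ki'} and Lemma \ref{enume} are in hand. The only point worth emphasizing is that the argument is uniform in $n$ because the SCRIM factor $x-1$ is present in $x^n-1$ for every $n$ coprime to $q$, guaranteeing a fixed point of $\hat{~}$ on $I$ regardless of $n$; this is what pins down the parity of $t$.
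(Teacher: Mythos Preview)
Your proposal is correct and follows essentially the same argument as the paper: both directions invoke Theorem~\ref{ki+ki'}, the necessity direction singles out the index corresponding to the SCRIM factor $x-1$ and uses Lemma~\ref{enume} to obtain a fixed point of $\hat{~}$, yielding $2k_{i_0}=t$, while the sufficiency direction takes $k_i=t/2$ throughout.
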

\begin{proof}Assume that $\displaystyle C=\langle \prod_{i\in I}f^{k_i}_i(x)\rangle$ is  a Hermitian self-dual cyclic code of   length $n$ over $R$. Then by Theorem \ref{ki+ki'}, we have  $k_i+k_{\hat{i}}=t$ for all $i \in I$.  Without loss of generality,  assume that $f_1(x)$ is a polynomial suxh that $\varphi(f_1(x))=x-1$.   Since $ x-1$ is SCRIM, 	$f_1(x)=f_{\hat{1}} (x)$ by Lemma \ref{enume}. This  implies that  $2k_1=t$. Hence, $t$ is even.

    Conversely, assume that  $t$ is even. Then   $\displaystyle C=\langle \prod_{i\in I}f^{\frac{t}{2}}_i(x)\rangle$ is a Hermitian self-dual cyclic code of length $n$ over $R$.
\end{proof}

%In order to determine  the number of Hermitian self-dual cyclic codes of length $n$ over $R$, the following lemma is required. 

 Let   $I_1=\{i\in I|i=\hat{i}\}$ and $I_2=\{i\in I|i<\hat{i}\}$.  Since  $\hat{ }$  is a permutation  on $I$ such that   $ \widehat{\hat{i} }=i$,  the monic basic irreducible factors of  $x^n-r_0$ can be rearrange of the form $\displaystyle x^n-r_0=\prod_{i\in I_1}f_i(x)  \prod_{i\in I_2}f_{i}(x)f_{\hat{i}}(x)$. Consequently, $f_i(x)=f_i^\dagger(x)$ for all $i\in I_1$ and  $f_i(x)\ne f_i^\dagger(x)$ for all $i\in I_2$.  From  Lemma \ref{enume}, it follows that the number  $|I_1|$ of  self-conjugate-reciprocal monic basic irreducible factors of $x^n-r_0$  is equal to $|\Omega_{q^2,n^\prime}|$, the number of SCRIM factors of $x^n-1$ in $\mathbb{F}_q[x]$. Subsequently,     $|I_2|=|\Lambda_{q^2,n^\prime}|$.

 From Theorem  \ref{ki+ki'} and Theorem \ref{theorem3.10},  it can be deduced that a cyclic code    $C=\langle \prod_{i\in I_1}f_i^{k_i}(x)  \prod_{i\in I_2}f_{i}^{k_i}(x) f_{\hat{i}}^{k_{\hat{i}}} (x) \rangle $ of length $n$ over $R$ is  Hermitian self-dual if and only if $k_i=\frac{t}{2}$ for all $i\in I_1$ and $k_i+k_{\hat{i}}=t$ for all $i\in I_2$

The next corollary follows mediately for the above discussions.  
    
    \begin{cor}\label{numt}
        Let $t$ be an even positive integer. Then $C$ is a Hermitian self-dual cyclic code of length $n$ over $R$ if and only if $C$ can be written as a form $\langle \prod_{i\in I_1}f_i^{\frac{t}{2}}(x)  \prod_{i\in I_2}f_{i}^{k_i}(x) f_{\hat{i}}^{t-k_{{i}}} (x) \rangle ,$ where $0 \leq k_j \leq t.$ Then the number of Hermitian self-dual cyclic code of length $n$ over $R$ is $(t+1)^{|I_2|} =(t+1)^{|\Lambda_{q^2,n^\prime}|}$. 
    \end{cor}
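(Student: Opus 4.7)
The plan is to apply Theorem \ref{ki+ki'} to the rearranged factorization $x^{n}-r_{0}=\prod_{i\in I_{1}}f_{i}(x)\prod_{i\in I_{2}}f_{i}(x)f_{\hat{i}}(x)$, where $I_{1}=\{i\in I\mid i=\hat{i}\}$ and $I_{2}=\{i\in I\mid i<\hat{i}\}$. Any cyclic code $C$ of length $n$ over $R$ has a unique generator $\prod_{i\in I}f_{i}^{k_{i}}(x)$, and by Theorem \ref{ki+ki'}, self-duality is equivalent to $k_{i}+k_{\hat{i}}=t$ for every $i\in I$. I would then split this condition along the partition $I=I_{1}\sqcup I_{2}\sqcup\widehat{I_{2}}$.

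For $i\in I_{1}$, the condition $k_{i}+k_{\hat{i}}=t$ becomes $2k_{i}=t$, which is solvable precisely because $t$ is assumed even, and forces $k_{i}=t/2$ uniquely. For $i\in I_{2}$, the partner index $\hat{i}$ is distinct from $i$, so the relation $k_{\hat{i}}=t-k_{i}$ leaves $k_{i}$ as a free parameter ranging over $\{0,1,\dots,t\}$, with the exponent on $f_{\hat{i}}(x)$ then determined. Combining the two pieces yields exactly the claimed normal form
\[
C=\Bigl\langle\prod_{i\in I_{1}}f_{i}^{t/2}(x)\prod_{i\in I_{2}}f_{i}^{k_{i}}(x)\,f_{\hat{i}}^{t-k_{i}}(x)\Bigr\rangle,\qquad 0\leq k_{i}\leq t,
\]
establishing the forward and backward implications of the characterization at once.

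For the enumeration, the $I_{1}$ part of the generator is rigid while each index $i\in I_{2}$ contributes an independent choice of $k_{i}$ from a set of size $t+1$; by uniqueness of the exponent sequence $(k_{i})_{i\in I}$ these choices produce pairwise distinct codes, so the count is $(t+1)^{|I_{2}|}$. To finish I would invoke the identification recorded just before the corollary: by Lemma \ref{enume}, the self-conjugate-reciprocal basic irreducible factors of $x^{n}-r_{0}$ in $R[x]$ are in bijection (via $\varphi$) with the SCRIM factors of $x^{n}-1$ in $\mathbb{F}_{q^{2}}[x]$, hence $|I_{1}|=|\Omega_{q^{2},n^{\prime}}|$, and pairs $\{i,\hat{i}\}$ with $i\in I_{2}$ correspond to CRIM polynomial pairs, giving $|I_{2}|=|\Lambda_{q^{2},n^{\prime}}|$.

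There is no real obstacle here; the corollary is essentially bookkeeping on top of Theorem \ref{ki+ki'}, and the only substantive check is confirming that the choices on the $I_{2}$ part are genuinely free and independent (which follows from uniqueness of the exponent sequence in the generator) and that the translation $|I_{2}|=|\Lambda_{q^{2},n^{\prime}}|$ is legitimate (which is exactly what Lemma \ref{enume} provides).
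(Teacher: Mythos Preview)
Your proposal is correct and follows essentially the same route as the paper: the paper states that the corollary follows immediately from the preceding discussion, which consists precisely of applying Theorem \ref{ki+ki'} to the partition $I=I_{1}\sqcup I_{2}\sqcup\widehat{I_{2}}$, forcing $k_{i}=t/2$ on $I_{1}$ and leaving each $k_{i}$ with $i\in I_{2}$ free in $\{0,\dots,t\}$, then invoking Lemma \ref{enume} for the identification $|I_{2}|=|\Lambda_{q^{2},n^{\prime}}|$. Your write-up is, if anything, more explicit than the paper's about the uniqueness of the exponent sequence guaranteeing that distinct choices yield distinct codes.
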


\section{Conclusion and Remarks}
SCRIM  factors of $x^n-1$ over finite fields of square order have been  studied. The characterization of such factors is given together the enumeration formula in recursive forms.     Applications of obtained results in  coding theory have been discussed.   The characterization and enumeration of Hermitian complementary dual cyclic codes  of arbitrary lengths have been established. The characterization and enumeration of Hermitian  of  Hermitian self-dual simple root cyclic codes over finite chain rings of prime characteristic    have been  given  in terms of   SCRIM factors of $x^n-1$ over  $\mathbb{F}_{q^2}$.

It would be interesting to study SCRIM and SRIM factors of $x^n-\lambda$ over a finite field  and there applications in the study of some families of $\lambda$-constacyclic codes, where $\lambda$ is a unit in a finite field.

%
%\section*{Acknowledgements} 
%S. Jitman was supported   by the Thailand Research Fund and the Office of Higher Education Commission  of Thailand under Research
%Grant MRG6080012. 

%\section*{References}

\end{document}